\newcommand{\R}{\mathbb R}
\newcommand{\C}{\mathbb C}
\newcommand{\N}{\mathbb N}
\newcommand{\opt}{^{\star}}
\newcommand{\tran}{^{\top}}
\newcommand{\var}{\mathrm{var}}
\newcommand{\diag}{\mathrm{diag}}
\newcommand{\bd}{\mathrm{Beta}}
\newcommand{\ev}[1]{\mathbb{E}\left[#1\right]}
\newcommand{\pr}[1]{\mathbb{P}\left(#1\right)}
\newtheorem{theorem}{Theorem}
\newtheorem{lemma}[theorem]{Lemma}
\newcommand{\partialk}{\frac{\partial}{\partial\theta_k}}
\renewcommand{\Re}{\mathsf{I\!Re}}
\newcommand{\tr}{\mathsf{Tr}}
\newcommand{\dagg}{^{\dagger}}
\renewcommand{\Re}{\mathsf{I\!Re}}
\renewcommand{\Im}{\mathsf{I\!Im}}
\newcommand{\Vct}{V_{\mathsf{ct}}}
\newcommand{\ct}{{\mathsf{ct}}}
\newcommand{\U}{\mathrm{U}}
\newcommand{\SU}{\mathrm{SU}}
\newcommand{\leo}{\leq \mathcal{O}}
\newcommand{\eeo}{= \mathcal{O}}
\newcommand{\geo}{\ge \mathcal{O}}
\newcommand{\bigo}{\mathcal{O}}
\newcommand{\cng}{\gategroup[2,steps=1,style={dashed,rounded corners,fill=white, inner xsep=2pt},background]{}}
\begin{document}
\title{Sketching the Best Approximate Quantum Compiling Problem
}

\author{%%
\IEEEauthorblockN{Liam Madden$^1$, Albert Akhriev$^2$, Andrea Simonetto$^3$}
\vspace*{2mm}
\IEEEauthorblockA{$^1$ University of Colorado Boulder, USA. Email: liam.madden@colorado.edu}
\IEEEauthorblockA{{$^2$ IBM Quantum, IBM Research Europe, Dublin, Ireland}}
\IEEEauthorblockA{$^3$ UMA, ENSTA Paris, Institut Polytechnique de Paris, 91120 Palaiseau, France}
\vspace*{-5mm}}

\maketitle

% REQUIRED
\begin{abstract}
This paper considers the problem of quantum compilation from an optimization perspective by fixing a circuit structure of CNOTs and rotation gates then optimizing over the rotation angles. We solve the optimization problem classically and consider algorithmic tools to scale it to higher numbers of qubits. We investigate stochastic gradient descent and two sketch-and-solve algorithms. For all three algorithms, we compute the gradient efficiently using matrix-vector instead of matrix-matrix computations. Allowing for a runtime on the order of one hour, our implementation using either sketch-and-solve algorithm is able to compile 9 qubit, 27 CNOT circuits; 12 qubit, 24 CNOT circuits; and 15 qubit, 15 CNOT circuits. Without our algorithmic tools, standard optimization does not scale beyond 9 qubit, 9 CNOT circuits, and, beyond that, is theoretically dominated by barren plateaus.
\end{abstract}

\begin{IEEEkeywords}
Optimization, stochastic programming, compilers
\end{IEEEkeywords}

% {\color{red}  Full papers should: (1) clearly describe innovative and original research, or (2) report a survey on a research topic in the field. Technical papers are limited to 10 pages (including figures and tables and appendices), plus two additional pages for references. 

% OR

% Short papers are 4-6 pages, plus one page of references. These papers should fall into one of hte following categories:New 

%     New Ideas and Emergent Results (NIER) papers describe novel and promising ideas and/or techniques that are in an early stage of development.

% }

% \href{https://qce.quantum.ieee.org/2022/call-for-papers/}{QCE Quantum week}

% {\color{blue} Technical paper abstract due: Fri, April 22, 2022
% Full technical paper due: Fri, April 29, 2022}

% \newpage

\section{Introduction}

With the steady advances in quantum hardware and volume~\cite{jurcevic2020demonstration}, quantum computing is well on track to become widely adopted in science and technology in the near future. One of the core challenges to enable its use is the availability of a flexible and reliable quantum compiler, which can translate any target quantum circuit into a circuit that can be implemented on real hardware with gate set, connectivity, and length limitations. 

Several works have focused on how to efficiently map different gates into canonical (universal) gate sets up to an arbitrary accuracy, e.g.,~\cite{dawson2005,selinger2013nqubit,redu2008,redu2013}, or how to ``place'' the target circuit onto the real connectivity-limited hardware, e.g.~\cite{place2008,place2019c,tket,tan2020optimal,Giacomo2021}. In this paper, we are interested instead in best approximate quantum compilation, meaning finding a circuit that can be implemented in hardware, that is the closest as possible (with respect to a pertinent metric) to a desired (or target) circuit.  

Works for the best approximate quantum compilation problems appeared in the literature as~\cite{cincio2018,khatri2019,younis2020qfast,younis2021qfast,squander,rakyta2021approaching,our}. Here we focus primarily on the recent~\cite{khatri2019,our}, which formulate the problem as a mathematical optimization program over properly parameterized hardware-compatible circuits. In particular, in~\cite{our}, one defines a target circuit as a unitary matrix in $n$ qubits, $U$, and a parametric ansatz $\Vct(\theta)$ built upon allowed gates and interconnections, and solves (classically) the optimization problem:
\begin{equation}\label{theproblem}
    \min_{\theta}\, \frac{1}{2d}\|\Vct(\theta)-U\|_F^2,
\end{equation}
where $\|\cdot\|_F$ denotes the Frobenious norm and $d=2^n$. In~\cite{khatri2019}, the approach is qualitatively the same, even though the authors use a slightly different parametric ansatz and they show how to run the optimization on a quantum computer (albeit with considerable error due to quantum noise).

Since for near-term applications, the approximate quantum compilation problem will have to be solved classically, we focus here on {\bf algorithmic tools to boost its scaling} from the current $n=5$ (for random targets) in \cite{our} and $n=9$ (for short targets) in \cite{khatri2019}. In particular, we look at characterizing the landscape properties of~\eqref{theproblem} and we study computational techniques coming from stochastic gradient descent and sketching ~\cite{ghadimi2013stochastic,halko2011finding}, which involve fast and efficient computations.  

We remark that in many applications, e.g., the ones stemming from multi-body Hamiltonian simulations, even a small improvement in the number of qubits could be of considerable importance. Larger-scale circuits can then be treated in a hierarchical way, by dividing such circuits into smaller pieces, which will gain more flexibility if one is able to compile slightly larger pieces. In addition, the aim here is not to propose a method that scales to any $n$ (which may not be possible in general), but to improve upon existing techniques. Last but not least, let us remark that adding $1$ qubit in the quantum circuit, multiplies the size of the involved matrices by $4$.

\smallskip
{\bf Contributions.} Our stepping stone is our previous work~\cite{our}. There, we looked at target unitary matrices to be compiled in the space of special unitary matrices of dimension $d$, $\SU(d)$,  where $d=2^n$ and $n$ is the number of qubits. In particular, we considered random unitary target circuits which require a circuit length of $L\ge \frac{1}{4}\left(4^n-3n-1\right)$ in order to be exactly compiled. We solved the optimization problem~\eqref{theproblem} using Nesterov's method and were able to exactly compile circuits up to $n=5$. Here, we study how to go beyond that. 

First, we follow~\cite{khatri2019} in (1) restricting the parametric ansatz to have a special spin structure with a number of parameters that grows linearly\footnote{In particular, in~\cite{khatri2019} the authors consider a structure with an initial layer of $z$-rotations, a middle layer consisting of one cycle of the spin structure without rotations, and a final layer of $z$-rotations. In total, this parametric circuit has $2n$ parameters and they go up to $n=9$.} in the number of qubits $n$; (2) considering target matrices $U$ generated by taking random values of $\theta$ in the parametric ansatz we choose, i.e., matrices that have the same structure as the matrix $\Vct(\theta)$ (so we know that we can exactly compile them, if we are able to solve Problem~\eqref{theproblem} to optimality); (3) solving the optimization problem using the quasi-Newton method L-BFGS~\cite{byrd1995limited} whenever possible, since it is much faster than Nesterov's method\footnote{L-BFGS is not only faster, but it is also more conservative on the choice of parameter selection. In particular, we use default history size and there is no need to specify the learning rate as in the gradient descent case.}. 

Note that we restrict to the spin structure for simplicity and could easily consider other structures. Considering target matrices from the same structure is also for simplicity----so that we know the minimum of the objective is zero----but it makes sense in a context where we want to compile the target with high precision. Requiring that the number of parameters grows linearly in $n$ allows us to be in a computationally easier setting than the more complete problem we looked at in~\cite{our}, and therefore allows us to find what scale is achievable for such circuits. This could further guide how to compile circuits with significantly more parameters. With this in place,

% The simplifications (1), (2), and (3) allow us to be in a computationally easier setting than the more complete problem we looked at in~\cite{our}, and therefore allow us to find what scale is achievable in (what we may consider) the best case. This could further guide how to tackle more complex problems.  
% With this in place,

$\bullet$ We explain that, based on \cite{mcclean2018barren} and \cite{cerezo2021cost}, Problem~\eqref{theproblem} has a barren plateaus property, which prevents standard optimization from scaling beyond $n \approx 9$;

$\bullet$ We modify Problem~\eqref{theproblem} by considering a stochastic variant thereof and we propose three algorithms (one based on stochastic gradient descent~\cite{ghadimi2013stochastic} and two on sketching~\cite{halko2011finding}). The sketching algorithms solve the carefully modified problem, which does not necessarily have barren plateaus, so that we are able to scale beyond $n = 9$ to $n=15$. We empirically show that the modification still allows us to optimize the original  Problem~\eqref{theproblem}. 

Our numerical results are the following:

$\bullet$ For $n=9$ and $L=3n=27$ CNOTs we have $3n+4L=135$ parameters. We find that 10\% sketching is sufficient to achieve a 25\% success rate running in around 30 minutes.

$\bullet$ For $n=12$ and $L=2n=24$ CNOTs we have $3n+4L=132$ parameters. We find that 5\% sketching is sufficient to achieve a 70\% success rate running in around 100 minutes.

$\bullet$ For $n=15$ and $L=n=15$ CNOTs we have $3n+4L=105$ parameters. We find that 0.3\% sketching is sufficient to achieve a 20\% success rate running in around 80 minutes.

Note that the success rate does not need to be close to 1, as it just indicates how many runs (possibly in parallel) we expect to take before finding a solution to Problem~\eqref{theproblem}.

% Thus, it leads to a multiplier of the runtime.

% Also, note that the runtime is dependent on hardware and Numpy version. {\color{violet} We used...}

As we successfully applied sketching in such a way so as to find solutions to Problem~\eqref{theproblem} with $n=15$ instead of just $n=9$, we see great merit in this approach and believe this paper to be the first stepping stone towards the use of (classical) sketching in quantum compiling.

% and that, by the parameter shift rule, $\partialk \Vct(\theta)$ is actually a similar circuit to $\Vct(\theta)$ (and $\partialk$ commutes with the trace operator) so results that apply for arbitrary matrices in $\SU(d)$ will have a connection to the gradient of the objective function as well.

% Before we proceed there are two important areas of subtlety we would like to point out. First, we will consider two related objective functions, the real objective function
% \begin{align*}
%     \frac{1}{2d}\|\Vct(\theta)-U\|_F^2=1-\frac{1}{d}\Re\langle \Vct(\theta),U\rangle\in [0,2],
% \end{align*}
% which we minimize, and the absolute objective function
% \begin{align*}
%     \frac{1}{d}|\langle \Vct(\theta),U\rangle|\in[0,1],
% \end{align*}
% which we maximize. Second, by the parameter shift rule, $\partialk \Vct(\theta)$ is actually a similar circuit to $\Vct(\theta)$ (and $\partialk$ commutes with the trace operator) so results that apply for arbitrary matrices in $\SU(d)$ will have a connection to the gradient of the objective function as well.

\smallskip
{\bf Organization.} The remainder of the paper is organized as follows. Section~\ref{sec:formulation} describes the problem formulation, the choice of the ansatz, and the barren plateau property. In Section~\ref{sec:algos}, we present and analyze three stochastic-based algorithms, while in Section~\ref{sec:comp}, we focus on describing how to compute gradients efficiently (gradients are the cornerstone of our formulation, so their efficient computation is key). Finally, Section~\ref{sec:results} showcases the numerical results and we close in Section~\ref{sec:conclusions}. The Appendix contains proofs and more mathematical intuition on the choices made.  

\smallskip
{\bf Notation.} Notation is wherever possible standard. The number of qubits is indicated with $n$, which corresponds to circuits represented by unitary matrices in $d = 2^n$ dimension, indicated by $\U(d)$. We also indicate with $\SU(d)$ the special unitary matrices of dimension $d\times d$. For a matrix $Q$, $Q\dagg$ represents its Hermitian transpose.  

Further, to avoid confusion with our sketching matrix $\Omega$ and variables $\theta$, we use the following asymptotic notation: when we write $f \leo(\cdot)$, we denote the usual big-O notation, when we write $f \eeo(\cdot)$ or just $\bigo(\cdot)$ we mean the usual $f = \Theta(\cdot)$, and finally $f \geo(\cdot)$ denotes the usual $f \geq \Omega(\cdot)$.

\section{Problem formulation}\label{sec:formulation}

\begin{figure*}
\centering
\begin{tikzpicture}
\node[scale=.65] {
\begin{quantikz}
& \gate{R_z} & \gate{R_y} & \gate{R_z} &\ctrl{1}\cng&\qw&\ctrl{1}\cng&\qw&\ctrl{1}\cng&\qw&\ctrl{1}\cng&\qw&\qw\\
& \gate{R_z} & \gate{R_y} & \gate{R_z} &\targ{}&\ctrl{1}\cng&\targ{}&\ctrl{1}\cng&\targ{}&\ctrl{1}\cng&\targ{}&\ctrl{1}\cng&\qw\\
& \gate{R_z} & \gate{R_y} & \gate{R_z} &\ctrl{1}\cng&\targ{}&\ctrl{1}\cng&\targ{}&\ctrl{1}\cng&\targ{}&\ctrl{1}\cng&\targ{}&\qw\\
& \gate{R_z} & \gate{R_y} & \gate{R_z} &\targ{}&\qw&\targ{}&\qw&\targ{}&\qw&\targ{}&\qw&\qw&\rstick{}
\end{quantikz}
\quad with \quad 
\begin{quantikz}
 & \ctrl{1} \cng & \qw  \\  & \targ{} & \qw 
\end{quantikz}
$\equiv$
\begin{quantikz}
 & \ctrl{1} & \gate{R_y} & \gate{R_z} & \qw \\  & \targ{} & \gate{R_y} & \gate{R_x} & \qw & \rstick{.}
\end{quantikz}};
\end{tikzpicture}

\caption{Our spin ansatz for $n=4$ qubits and $L=12$ CNOTs. The CNOTs boxed with dashes are CNOT ``units'' as shown on the right-hand side. $\Vct(\theta)$ corresponds to a quantum circuit with the shown structure with its $3n+4L=60$ parameters (one for each rotation gate) given by $\theta\in\R^{3n+4L}$.}
\label{fig:spin}
\end{figure*}
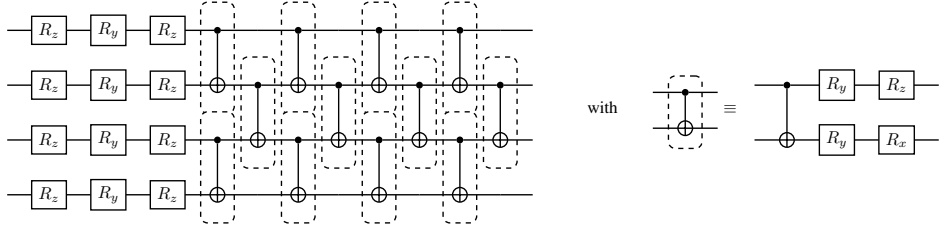

%{\color{red} Formal problem formulation -- adapt the below}

We are interested in compiling a quantum circuit, which we formalize as finding the ``best'' circuit representation in terms of an ordered gate sequence of a target unitary matrix $U\in \U(d)$, with some additional hardware constraints. In particular, we look at representations that could be constrained in terms of hardware connectivity, as well as {circuit length}, and we choose a gate basis in terms of CNOT and rotation gates. The latter choice is motivated by an implementation in the Qiskit software package~\cite{Qiskit2019}. We recall that the combination of CNOT and rotation gates is universal in $\SU(d)$ and therefore it does not limit compilation~\cite{nielsenbook}.

To properly define what we mean by ``best'' circuit representation, we define the metric as the Frobenius norm between the unitary matrix of the compiled circuit $V$ and the target unitary matrix $U$, i.e., $\|V - U\|_{\mathrm{F}}$. This choice is motivated by mathematical programming considerations, and it is related to other formulations that appear in the literature~\cite{our}. 

We are now ready to formalize the approximate quantum compiling problem as follows. 

\emph{Given a target special unitary matrix $U \in \SU(2^n)$ and a set of constraints, in terms of connectivity and {length}, find the closest special unitary matrix $V \in \mathcal{V} \subseteq \SU(2^n)$, where $\mathcal{V} $ represents the set of special unitary matrices that can be realized with rotations and CNOT gates alone and satisfy both connectivity and {length} constraints, by solving the following mathematical program: }
\begin{equation}\label{eq:aqcp}
\textbf{(AQCP)}\qquad \min_{V \in \mathcal{V} \subseteq SU(2^n)} \, f(V):=\frac{1}{2d}\|V - U\|_{\mathrm{F}}^2.
\end{equation}

We call~\eqref{eq:aqcp} the approximate quantum compiling (master) problem (AQCP). A solution of the problem is an optimal $V$ indicated as $V\opt$, along with an ordered set of gate operations that respect the constraints.

\subsection{A parametric circuit (ansatz)}

Problem~\eqref{eq:aqcp} is a very hard combinatorial problem to solve. In~\cite{our, khatri2019} and other works, one parametrizes $V$ with a fixed topology of CNOTs and rotation gates of free rotation angles (the parameters), and indicates with $\Vct(\theta)$ the resulting implementable ansatz. With this in place, one can rewrite~\eqref{eq:aqcp} as \eqref{theproblem}. The resulting problem is a continuous albeit nonconvex optimization problem. The choice of ansatz (and specifically its length) determines the quality of the obtained solution in terms of approximation error. In~\cite{our}, we have presented several ansatz design possibilities, while in \cite{khatri2019} a spin ansatz is proposed. As we argue in~\cite{our}, the topology of the ansatz is (on average) not a determining factor in the approximation error (as long as it satisfies some minimal properties such as being connected), while its length is. 

Here, we focus on the spin ansatz reported in Figure~\ref{fig:spin}, which is quite flexible in terms of connectivity (only a line connectivity is required), but our qualitative results will not change if one were to select a different ansatz.

\subsection{Equivalent problem formulation}

We now re-look at Problem~\eqref{theproblem} under a slightly different lens to facilitate our algorithm design and theoretical analysis. To do so, we introduce a lemma to transform the norm objective into an inner product formulation. This transformation is useful for two main reasons. First, the inner product formulation will help us compute the derivatives of the objective functions in a more efficient way, as shown in Section~\ref{sec:comp}. Second, its form naturally motivates a sketching approach, as we will see in Section~\ref{subsec:rla}. The proof of the lemma is in Section~\ref{app:equiv} of the Appendix. 
% Third, the expected value operator passes into the real operator and into the inner product but does not pass into the Frobenius norm, so whenever an expected value operator appears in the analysis, it is helpful to switch to the inner product formulation.

\begin{lemma}
\label{lem:equiv}
Let complex matrix $Q\in\C^{d\times m}$ have orthonormal columns. Then, we have that
\begin{equation*}
    \frac{1}{2m}\|(V-U)QQ\dagg\|_F^2 = 1-\frac{1}{m}\Re\langle VQ,UQ\rangle\in[0,2].
\end{equation*}
\qed
\end{lemma}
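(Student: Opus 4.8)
The plan is to reduce the projected Frobenius norm to a plain one and then expand a single inner product, using only the two structural facts at hand: $Q$ has orthonormal columns, so $Q\dagg Q = I_m$ (note that $QQ\dagg$ is merely a projector and \emph{not} the identity, since $Q$ is tall), and $V,U \in \SU(d)$ are unitary, i.e. $V\dagg V = U\dagg U = I_d$. First I would write the left-hand side as a trace,
$\|(V-U)QQ\dagg\|_F^2 = \tr\!\big(QQ\dagg (V-U)\dagg (V-U) QQ\dagg\big)$,
and then cyclically rotate the trace so that a factor $Q\dagg Q = I_m$ surfaces (equivalently, use idempotence of the projector $P = QQ\dagg$ via $\tr(P M\dagg M P) = \tr(M\dagg M P)$ with $M = V-U$). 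Either way the projector collapses and I obtain the key reduction $\|(V-U)QQ\dagg\|_F^2 = \|(V-U)Q\|_F^2$. The one point to get right here is to aim squarely for $\|(V-U)Q\|_F^2$ rather than trying to simplify while retaining $QQ\dagg$; because $Q$ is tall one cannot drop the projector naively, and the reduction goes through only by pairing it with $Q\dagg Q = I_m$ under the trace.

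Next I would expand $\|(V-U)Q\|_F^2 = \|VQ\|_F^2 + \|UQ\|_F^2 - 2\Re\langle VQ, UQ\rangle$, where the two cross terms are complex conjugates of one another and hence sum to $2\Re\langle VQ,UQ\rangle$. Unitarity then supplies the two square terms immediately: $\|VQ\|_F^2 = \tr(Q\dagg V\dagg V Q) = \tr(Q\dagg Q) = \tr(I_m) = m$, and identically $\|UQ\|_F^2 = m$. Dividing by $2m$ yields exactly $1 - \tfrac1m \Re\langle VQ, UQ\rangle$, establishing the claimed identity.

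Finally, for the range statement: nonnegativity is immediate, since the left-hand side is a scaled squared norm, forcing the expression to be $\ge 0$. The upper bound of $2$ is where Cauchy--Schwarz enters: $|\langle VQ, UQ\rangle| \le \|VQ\|_F\,\|UQ\|_F = m$, so $\Re\langle VQ,UQ\rangle \ge -m$ and therefore $1 - \tfrac1m\Re\langle VQ,UQ\rangle \le 2$. I expect no serious obstacle in this argument; the only delicate step is the first reduction, precisely because the tallness of $Q$ means $QQ\dagg \ne I_d$, so the collapse of the projector must be justified by cyclicity of the trace together with $Q\dagg Q = I_m$ rather than by inspection.
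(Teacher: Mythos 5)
Your proof is correct and follows essentially the same route as the paper's: both pass to the trace form, use cyclicity together with $Q\dagg Q = I_m$ to collapse the projector $QQ\dagg$, and then invoke unitarity of $V$ and $U$ to evaluate the diagonal terms, with your version merely reorganizing the computation by reducing to $\|(V-U)Q\|_F^2$ first. Your explicit Cauchy--Schwarz argument for the range $[0,2]$ is a small addition the paper leaves implicit, but it does not change the nature of the argument.
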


Note that for the special case $m=d$ and $Q=I$, we get
\begin{align*}
    \frac{1}{2d}\|V-U\|_F^2=1-\frac{1}{d}\Re\langle V,U\rangle\in [0,2],
\end{align*}
which can be used to transform Problem~\eqref{theproblem} as the maximization of $\Re\langle \Vct(\theta),U\rangle/d\in[-1,1]$.

% In the next subsection, we consider the related objective function $|\langle \Vct(\theta),U\rangle|/d\in [0,1]$. 

%Thus, when we talk about barren plateaus in the next subsection, we do so for the related objective function $|\langle \Vct(\theta),U\rangle|/d\in [0,1]$.

\subsection{Preliminary results: Concentration and Barren Plateaus}
\label{subsec:bp}

Before presenting our main algorithms, let us focus briefly on the notion of barren plateaus, which is quite important in quantum computing and which further motivates our approach. Loosely speaking, we can show that the objective of  Problem~\eqref{theproblem} satisfies the barren plateaus property, which makes it prohibitively hard to optimize. In particular, standard algorithms cannot scale beyond $n \approx 9$ for Problem~\eqref{theproblem}.

%This further advocates a different treatment of the objective and motivates our sketching approach, for which we do not have a formal proof of the lack of barren plateaus but do have good empirical evidence.

Details and intuition on barren plateaus are given in Appendix~\ref{app:concentration}. The main takeaway is that the work \cite{cerezo2021cost} shows that for parameterized circuits such as $\Vct$ and ``global'' objective functions such as that of Problem~\eqref{theproblem}, if $L\geo (1)$ then there are barren plateaus. On the other hand, \cite{mcclean2018barren} also shows that for ``local'' objective functions, if $L\geo (n)$ then there are barren plateaus, while if $L\leo (\log(n))$ then there aren't. We have a global objective and $L\eeo (n)$, and so {\bf Problem~\eqref{theproblem} has barren plateaus}. 

This further motivate us to modify the objective and consider a \emph{sketched objective} with sketching dimension $m\eeo (\log(d))$. While we do not prove that the sketched objective does not have barren plateaus for $L\leo (n)$, we do provide numerical evidence that this is the case. Let us look now at how to build such a modified objective.

% The proof of the barren plateaus property for Problem~\eqref{theproblem} is given in Appendix~\ref{.}, along with some more details on the property. It is based on the fact that the work \cite{cerezo2021cost} shows that for parameterized circuits such as $\Vct$, the number of CNOTs, $L$, if the objective is global, that is, if it requires measurement of all $n$ qubits, then it satisfies the barren plateaus property for all $L$. Our objective in~\eqref{theproblem} is global.

\section{Algorithms}\label{sec:algos}

We are now ready to propose our main algorithms. In particular, we will discuss three of them, each rooted in different formulation or approximation of Problem~\eqref{theproblem}.
%all rooted in different ways to do sketching of a gradient method to solve Problem~\eqref{theproblem}. 

We remark that the state of the art for Problem~\eqref{theproblem} with $L\eeo (n)$ is to go up to $n=9$ (using the quasi-Newton method L-BFGS~\cite{byrd1995limited}).
% We remark that the state of the art is to apply the quasi-Newton method L-BFGS to~\eqref{theproblem} with $L\eeo(n)$. As we discussed, with this we are only able to go up to $n=9$.
In order to scale to higher numbers of qubits, there are three things to consider: (1) the complexity of the gradient computation, (2) the number of gradient calls required to reach convergence to a local minimum, and (3) the suitability of the local minimum. We will focus on decreasing the complexity of the gradient computation while keeping our eye to the other two considerations. See Table~\ref{tab:alg} for the three algorithms that we consider.

We remark that even though the gradient can already be computed efficiently via a quantum circuit, as done in \cite{khatri2019}, the barren plateaus issue presented in Section~\ref{subsec:bp} prevents \cite{khatri2019} from going beyond $n=9$. The barren plateaus problem is tied up with our second consideration.

\begin{table}
    \centering
     \caption{Algorithms for Problem~\eqref{theproblem}}
    \label{tab:alg}
    \begin{tabular}{cl}
    \toprule
    Algorithm & Description \\ \toprule
    SGD     &  Stochastic gradient descent: Sample a new mini-batch at \\ & each iteration\\
    \midrule
    S\&S-1     &  Sketch-and-solve 1: Sample a single mini-batch at the start\\
    \midrule
    S\&S-2     &  Sketch-and-solve 2: Use QR decomposition of sketched\\ 
    & error matrix to project problem, taking multiple epochs\\
    & of a fixed number of iterations\\
    \bottomrule
    \end{tabular}
\end{table}

% Let us turn to Table~\ref{tab:alg}. 

Stochastic gradient descent (SGD) with batch-size $m$ divides the complexity of the gradient computation by $d/m$ but increases the number of gradient calls required to reach convergence. Additionally, there is evidence in the literature on neural networks that SGD may find local minima that are flat rather than sharp \cite{keskar2017large}. Ultimately, SGD converges too slowly for it to scale beyond $n=9$. However, determining the theoretical trade-off between the gradient computation and the number of gradient calls is illustrative for the two sketch-and-solve methods. 

The two sketch-and-solve methods, S\&S-1 and S\&S-2, minimize a sketched objective. With sketching dimension $m$, they divide the gradient computation by $d/m$. Moreover, they do not increase the number of gradient calls required to reach convergence, nor do they create further spurious local minima. In fact, they seem to have the effect of smoothing the optimization landscape and so making it easier to navigate. However, they solve sketched problems, not the true problem. Fortunately, the solutions found by S\&S-1 and S\&S-2 tend to be solutions to the true problem as well. Thus, they successfully scale us beyond $n=9$ (to $n=15$).

In the following, we will make these considerations more formal. We let $\Omega$ denote a complex matrix with real and imaginary components sampled independently from $N(0,1)$.

% Also in the following, it will be beneficial to rewrite Eq.~\eqref{theproblem} as the equivalent maximization problem
% \begin{equation}
% \label{eq:maxproblem}
%     \max_{\theta} \frac{1}{d} \Re\langle \Vct(\theta),U\rangle,
% \end{equation}

% The stochastic approximation framework is the following. Given $V$ and $U$, we want to quickly compute $X\in\C^{d\times m}$ such that $\Re\langle VX,UX\rangle/m\approx \Re\langle V,U\rangle/d$. We can then use this to show that, given $\theta$, we can quickly compute $X\in\C^{d\times m}$ such that $\nabla \Re\langle \Vct(\theta)X,UX\rangle/m\approx \nabla \Re\langle \Vct(\theta),U\rangle/d$. Then, given a sequence $(\theta_t)$ generated by an algorithm, we can define a filtration and say something about the distribution of the gradient noise conditioned on the filtration. Note that we have to compute a new $X$ at each iteration. Once we have gone through these steps, we have recourse to the many convergence proofs for stochastic approximation. However, by introducing noise, we increase the number of gradient calls.

% The sketching framework consists of choosing a single $X$ such that $\Re\langle \Vct(\theta)X,UX\rangle/m\approx \Re\langle \Vct(\theta),U\rangle/d$ for all $\theta$. On the one hand, we will show that this cannot be done if $\text{Rng}(\Vct)=\SU(d)$. On the other hand, we will suggest that the rank of $\Vct$ really is the dimension of $\theta$ and so $m=3n+4L+p$ for an over-sampling parameter $p$ should be sufficient.

\subsection{Stochastic Gradient Descent}
\label{subsec:sgd}

The first algorithm is SGD. By re-writing~\eqref{theproblem} as a stochastic optimization problem, we can find a solution using SGD and so decrease the cost of computing the gradient. This is common in machine learning where datasets are so large that the gradient cannot be computed all at once. The downside of this approach is that SGD requires more gradient calls to converge than gradient descent (GD) does. In this section, we will take the necessary steps to determine theoretical guarantees on the convergence rate of SGD for our problem.

First, using the results of \cite{kania2015short,kania2015trace}, we re-write~\eqref{theproblem} as a stochastic optimization problem as follows:
\begin{multline}
\label{eq:stoch}
\min_{\theta} \frac{1}{2d} \|\Vct(\theta)-F\|_F^2 \underbrace{\equiv}_{\textrm{Lemma~\ref{lem:equiv}}}  \min_{\theta} -\frac{1}{d}\Re\langle \Vct(\theta),U\rangle
   \equiv \\ \min_{\theta} \ev{-\Re\langle \Vct(\theta)x,Ux\rangle}
\end{multline}
where $\ev{\cdot}$ represents the expectation taken with respect to the random vector $x\in \C^d$ sampled uniformly from the unit sphere.

% Similarly,
% \begin{align*}
%     &\frac{1}{2d}\|\Vct(\theta)-U\|_F^2 = 1-\frac{1}{d}\Re\langle \Vct(\theta),U\rangle\\
%     &\hspace{1cm}= 1-\ev{\Re\langle \Vct(\theta)x,Ux\rangle}\\
%     &\hspace{1cm}= \ev{1-\Re\langle \Vct(\theta)x,Ux\rangle}\\
%     &\hspace{1cm}= \frac{1}{2}\E[\langle \Vct(\theta)x,\Vct(\theta)x\rangle+\langle Ux,Ux\rangle\\
%     &\hspace{2cm}-\langle \Vct(\theta)x,Ux\rangle-\langle Ux,\Vct(\theta)x\rangle]\\
%     &\hspace{1cm}= \ev{\frac{1}{2}\|[\Vct(\theta)-U]x\|^2}.
% \end{align*}

Then taking $T$ iterations of SGD with batch-size $m$ and step-size sequence $(\eta_t)$ for $t\in \mathbb{N}$ gives the following algorithm.

\begin{algorithm}[H]
\begin{algorithmic}[1]
\Require $U$, ansatz $V_\ct(\cdot), (\eta_t),T,m$
\State Randomly initialize $\theta_0$
\For{$t=0,\ldots,T-1$}
\State Sample $\Omega\in \C^{d\times m}$
\State Compute $X$ by normalizing the columns of $\Omega$
\State $g\leftarrow -\nabla \Re\langle \Vct(\theta_t)X,UX\rangle/m$
\State $\theta_{t+1}\leftarrow \theta_t-\eta_tg_t$
\EndFor
\State \Return $\theta_T$
\end{algorithmic}
\caption{Stochastic gradient descent}
\label{algo:sgd}
\end{algorithm}

% (1) sampling $\Omega\in \C^{d\times m}$, (2) setting $X\in\C^{d\times m}$ by normalizing the columns of $\Omega$, (3) computing $g=\nabla \Re\langle \Vct(\theta_t)X,UX\rangle/m$, and (4) setting $x_{t+1}=x_t+\eta_tg_t$. 

We now use results in SGD analysis to derive the theoretical properties of Algorithm~\ref{algo:sgd}. 

First, we use that the objective function of our problem has a $\rho$-Lipschitz continuous gradient by Theorem 2 of \cite{our}, with a bounded constant $\rho>0$ that can be estimated. If we were able to prove a bounded variance property, that there exists $\sigma^2<\infty$ such that
\begin{align*}
    \ev{\bigg\|\frac{\nabla \Re\langle \Vct(\theta)X,UX\rangle}{m}-\frac{\nabla \Re\langle \Vct(\theta),U\rangle}{d}\bigg\|^2}<\sigma^2~\forall \theta,
\end{align*}
then, for an appropriately chosen step-size sequence (depending on $\rho$), SGD would be guaranteed to compute $(\theta_t)$ such that
\begin{align*}
    \ev{\min_{t\in[T]}\bigg\|\frac{\nabla \Re\langle \Vct(\theta_t),U\rangle}{d}\bigg\|^2} \leo\left(\frac{1}{T}+\frac{\sigma}{\sqrt{T}}\right)
\end{align*}
% \begin{align*}
%     \ev{\min_{t}\bigg\|\frac{\nabla \Re\langle \Vct(\theta_t),U\rangle}{d}\bigg\|^2} \leq O\left(\frac{\rho+\sigma^2\log(T)}{\sqrt{T}}\right)
% \end{align*}
See \cite{ghadimi2013stochastic} for the details.

By contrast, if we instead apply the standard gradient descent method (GD) with constant step-size $\eta=1/\rho$, then, using Eq.~(2.1.9) of~\cite{Nesterov}, it is not difficult to derive that
\begin{align*}
    \ev{\min_{t\in[T]}\bigg\|\frac{\nabla \Re\langle \Vct(\theta_t),U\rangle}{d}\bigg\|^2} \leo \left(\frac{\rho}{T}\right).
\end{align*}

Note that both of these bounds are on the norm of the gradient and so measure convergence to a first-order stationary point (a point where the gradient is zero).

While GD has better asymptotic properties in terms of the iteration count ($1/T$ vs. $1/\sqrt{T}$), one iteration of GD costs $d/m \gg 1$ times as much as one iteration of SGD (since we perform matrix-vector computations for maximum efficiency as explained in Section~\ref{sec:comp}).

Let us look at the complexity to reach a certain error. Assume $\rho$ is large enough that we can ignore the asymptotic $1/T$ term for SGD. Then in order to reach $\epsilon$-tolerance, SGD has worst-case complexity $\geo(m\sigma^2/\epsilon^2)$ while GD has worst-case complexity $\geo(d\rho/\epsilon)$. In order to compare these, we need to derive $\rho$ and $\sigma^2$. We already proved $\rho\leo(nd)$ in Theorem 2 of \cite{our}, but this bound may be pessimistic. In the following, we derive $\sigma^2\leo(n/(md))$. Thus, if we set $\epsilon\eeo(1/d)$, then $\epsilon$ goes to zero exponentially fast with $n$ and, if $\rho\geo (n)$, SGD reaches $\epsilon$-tolerance $\geo(d)$ times faster than GD.

\medskip
\noindent {\bf Bounded variance property}

We will show now that the bounded variance property holds with a constant that decreases exponentially fast with respect to increasing the number of qubits, i.e., we derive $\sigma^2\leo(n/(md))$, as claimed. To do so, we need the following supporting lemma.

\begin{lemma}
\label{lem:ratio}
Given a vector $a\in\R^n$ with components $a_i$ for $i \in[n]$, define the random variable
\begin{align*}
    Y(a) = \frac{a_1X_1^2+\cdots +a_nX_n^2}{X_1^2+\cdots+X_n^2}
\end{align*}
where $X_1,\ldots,X_n$ are independent standard normal random variables. Then the variance of $Y(a)$ is bounded as
\begin{align*}
    \var(Y(a)) \leq  \frac{4(n-1)}{n^2(n+2)}\sum_{i=1}^n a_i^2.
\end{align*}
\end{lemma}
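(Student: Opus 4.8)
The plan is to recognize that $Y(a)$ is a linear form in the normalized squares $W_i \defeq X_i^2/S$, where $S \defeq X_1^2+\cdots+X_n^2$, so that $Y(a) = \sum_{i=1}^n a_i W_i$ with $\sum_i W_i = 1$. Writing $C$ for the covariance matrix of the vector $(W_1,\dots,W_n)$, the quantity to bound is just the quadratic form $\var(Y(a)) = a\tran C a$. By the exchangeability of the $X_i$, $C$ has the form $(v-c)I + c\,\mathbf 1\mathbf 1\tran$, where $v \defeq \var(W_i)$ and $c \defeq \mathrm{Cov}(W_i,W_j)$ ($i\neq j$) are common to all indices, so the entire problem reduces to computing the two scalars $v$ and $c$.

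The engine for computing these moments is the classical independence of radius and direction for an isotropic Gaussian: since $X\sim N(0,I_n)$ has a rotationally invariant density, the norm $\|X\|$ is independent of the direction $X/\|X\|$. Each $W_i = (X_i/\|X\|)^2$ depends only on the direction, so the whole vector $(W_1,\dots,W_n)$ is independent of $S=\|X\|^2$. This independence lets me convert awkward ratio-moments into products of ordinary moments: $\E[X_i^4] = \E[W_i^2 S^2] = \E[W_i^2]\,\E[S^2]$ and, for $i\neq j$, $\E[X_i^2 X_j^2] = \E[W_iW_j]\,\E[S^2]$. I expect this independence step to be the only real obstacle; everything downstream is bookkeeping.

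With that in hand I would plug in the standard Gaussian moments $\E[X_i^4]=3$, $\E[X_i^2 X_j^2]=1$ for $i\neq j$, and $\E[S^2] = \sum_i \E[X_i^4] + \sum_{i\neq j}\E[X_i^2]\E[X_j^2] = 3n + n(n-1) = n(n+2)$. This yields $\E[W_i^2] = 3/(n(n+2))$ and $\E[W_iW_j] = 1/(n(n+2))$, while $\E[W_i]=1/n$ follows by symmetry from $\sum_i \E[W_i]=1$. Expanding $\E[Y(a)^2] = \sum_{i,j} a_i a_j \E[W_iW_j]$ and subtracting $\E[Y(a)]^2 = (\sum_i a_i)^2/n^2$ then gives the exact identity $\var(Y(a)) = \tfrac{2}{n(n+2)}\sum_i a_i^2 - \tfrac{2}{n^2(n+2)}\bigl(\sum_i a_i\bigr)^2$.

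Finally I would discard the nonnegative term $\tfrac{2}{n^2(n+2)}(\sum_i a_i)^2$ to obtain $\var(Y(a)) \le \tfrac{2}{n(n+2)}\sum_i a_i^2 = \tfrac{2n}{n^2(n+2)}\sum_i a_i^2$, and note that $2n \le 4(n-1)$ for all $n\ge 2$, which upgrades this to the stated bound $\tfrac{4(n-1)}{n^2(n+2)}\sum_i a_i^2$. (Equivalently, one sees the sharp constant is $\tfrac{2}{n(n+2)} = \lambda_{\max}(C)$, the nonzero eigenvalue of $C$ on the orthogonal complement of $\mathbf 1$, consistent with the fact that $Y(a)$ is invariant under $a\mapsto a + t\mathbf 1$.)
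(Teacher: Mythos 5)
Your proof is correct, and it takes a genuinely different route from the paper's. The paper never touches the covariance of the individual weights $W_i$: it identifies the partial-sum ratios $Y_k=(X_1^2+\cdots+X_k^2)/(X_1^2+\cdots+X_n^2)$ as $\bd(k/2,(n-k)/2)$ random variables via the CDF of the F-distribution, extracts $\ev{Y_1^2}$ and $\ev{Y_2^2}$ from the Beta moment recurrence, and reduces $\ev{Y(a)^2}$ to those two moments through an algebraic expansion of $\left(\sum_i a_i x_i\right)^2$ in terms of $(x_i+x_j)^2$ and $x_i^2$; the sign-indefinite cross term $-\sum_{i\neq j}a_ia_j$ in the resulting exact variance is then bounded by Young's inequality, which is precisely where the constant $4(n-1)$ originates. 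You instead use radius--direction independence of the isotropic Gaussian to convert ratio moments into quotients of ordinary Gaussian moments, which hands you the full covariance matrix $C$ of the Dirichlet-type vector $(W_1,\dots,W_n)$ and the exact identity $\var(Y(a))=\frac{2}{n(n+2)}\sum_i a_i^2-\frac{2}{n^2(n+2)}\left(\sum_i a_i\right)^2$. This is the same quantity the paper reaches in its final display (substitute $\sum_{i\neq j}a_ia_j=\left(\sum_i a_i\right)^2-\sum_i a_i^2$ to see they agree), but organized so that the bound follows by discarding a visibly nonnegative term rather than estimating a cross term. Your route buys elementarity (no Beta/F-distribution facts, only standard Gaussian moments) and a sharper constant $2/(n(n+2))=\lambda_{\max}(C)$, whose optimality your eigenvalue remark correctly explains (since $C\mathbf{1}=0$ and $C$ acts as $(v-c)I$ on the orthogonal complement of $\mathbf{1}$), which you then deliberately relax to match the lemma; the paper's route, in exchange, delivers the whole law of each $Y_k$ and all its moments, not just second-order information. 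One small point to patch: your closing relaxation $2n\le 4(n-1)$ fails at $n=1$, but there your exact identity already gives $\var(Y(a))=0$, which matches the lemma's right-hand side of $0$, so a one-line remark covers that degenerate case.
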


Lemma~\ref{lem:ratio} allows us to characterize the variance of the stochastic objective function.

\begin{lemma}
\label{lem:unitbatch}
Let $U\in\U(d)$. Let the vector $x$ be a uniformly sampled complex unit vector. Then the variance of $\Re\langle Vx,Ux\rangle$ can be upper bounded as
\begin{align*}
   \sup_{V\in\U(d)}\var\left(\Re\langle Vx,Ux\rangle\right)\leq \frac{4(d-1)}{d(d+2)}.
\end{align*}
\end{lemma}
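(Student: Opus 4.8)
The plan is to reduce the variance of the real scalar $\Re\langle Vx,Ux\rangle$ to exactly the random variable $Y(a)$ of Lemma~\ref{lem:ratio} by diagonalizing and then invoking the rotational invariance of the uniform distribution on the complex unit sphere. First I would write $\langle Vx,Ux\rangle = x\dagg V\dagg U x = x\dagg W x$ with $W := V\dagg U \in \U(d)$; since $V\mapsto V\dagg U$ is a bijection of $\U(d)$, the supremum over $V$ equals the supremum over all unitary $W$. Taking the real part and using $\overline{x\dagg W x} = x\dagg W\dagg x$, I obtain $\Re\langle Vx,Ux\rangle = x\dagg M x$ with the Hermitian matrix $M := (W+W\dagg)/2$. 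This step matters: passing to the Hermitian part makes the quantity a genuine real-valued quadratic form whose eigenvalues are real, namely $\mu_j = \cos\phi_j\in[-1,1]$, where $e^{i\phi_j}$ are the eigenvalues of $W$.

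Next I would diagonalize $M = P\,\diag(\mu_1,\dots,\mu_d)\,P\dagg$ with $P$ unitary and set $y := P\dagg x$. Because $x$ is uniform on the complex unit sphere and that distribution is invariant under unitary maps, $y$ is again uniform on the sphere, so $\Re\langle Vx,Ux\rangle$ has the same law as $\sum_{j=1}^d \mu_j |y_j|^2$. Representing $y = z/\|z\|$ with $z$ having i.i.d.\ standard complex Gaussian entries, I write $|y_j|^2 = (g_j^2+h_j^2)/\sum_k (g_k^2+h_k^2)$, where $g_j,h_j$ are the real and imaginary parts, all i.i.d.\ real $N(0,1)$. Hence $\sum_j \mu_j|y_j|^2$ is precisely $Y(a)$ in the sense of Lemma~\ref{lem:ratio}, over the $2d$ independent standard normals $(g_1,h_1,\dots,g_d,h_d)$ with coefficient vector $a=(\mu_1,\mu_1,\dots,\mu_d,\mu_d)$.

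Finally I would apply Lemma~\ref{lem:ratio} and bound the coefficient norm. Since each $|\mu_j|\le 1$, we have $\sum_i a_i^2 = 2\sum_{j=1}^d \mu_j^2 \le 2d$, so the lemma gives $\var(\Re\langle Vx,Ux\rangle) \le \frac{4(2d-1)}{(2d)^2(2d+2)}\cdot 2d = \frac{2d-1}{d(d+1)}$, and a one-line check ($(2d-1)(d+2)\le 4(d-1)(d+1)$ for $d\ge 2$) shows this is at most the claimed $\frac{4(d-1)}{d(d+2)}$; taking the supremum over $W$ completes the argument.

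I expect the main obstacle to be the bookkeeping in the reduction rather than any hard estimate: one must (i) justify taking the real part to obtain a Hermitian form with real eigenvalues, (ii) correctly invoke unitary invariance of the uniform sphere to rotate into the eigenbasis, and (iii) handle the complex-to-real passage carefully, so that each cosine eigenvalue appears twice among the $2d$ real Gaussians. Once the random variable is matched to $Y(a)$, the remaining work is the routine bound $\sum_i a_i^2\le 2d$ and elementary simplification.
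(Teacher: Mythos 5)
Your proposal is correct, and it follows the same broad strategy as the paper's proof---diagonalize $V\dagg U$, use unitary invariance of the uniform distribution on the complex sphere to pass to the eigenbasis, and reduce to the random variable $Y(a)$ of Lemma~\ref{lem:ratio}---but the way you invoke Lemma~\ref{lem:ratio} differs in a substantive and, in fact, more careful way. The paper diagonalizes the unitary $V\dagg U=Q\dagg\Lambda Q$ directly and asserts that $\sum_j\Re(\lambda_j)|x_j|^2$ is distributed as $Y(\Re(\lambda))$ with $n=d$ real normals, then uses $\Re(\lambda)\in[-1,1]^d$ to conclude $\var\leq\frac{4(d-1)}{d^2(d+2)}\cdot d=\frac{4(d-1)}{d(d+2)}$. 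Strictly speaking, that distributional identification is inexact: for a uniformly random \emph{complex} unit vector, each $|x_j|^2$ is a ratio whose numerator is a chi-squared variable with \emph{two} degrees of freedom, so the correct law is that of $Y$ over $2d$ real Gaussians with each coefficient repeated twice---exactly the bookkeeping you carry out (your passage through the Hermitian part $(W+W\dagg)/2$ is equivalent to the paper's taking real parts of the unit-circle eigenvalues, since $W$ and $W\dagg$ commute and are simultaneously diagonalizable). Your application of Lemma~\ref{lem:ratio} with $n=2d$ and $a=(\mu_1,\mu_1,\dots,\mu_d,\mu_d)$ yields $\frac{2d-1}{d(d+1)}$, which is tighter than the stated $\frac{4(d-1)}{d(d+2)}$ for all $d\geq 2$ (with equality at $d=2$), so your final comparison step closes the argument; the only caveat is the degenerate case $d=1$, where your chain of bounds gives $\frac{1}{2}$ rather than the claimed $0$, but there $\Re\langle Vx,Ux\rangle$ is deterministic so the lemma holds trivially, and $d=2^n\geq 2$ throughout the paper. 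In short: same route, but your version repairs a real (if ultimately harmless) imprecision in the paper's own proof.
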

\begin{proof}
There exist unitary matrices $Q$ and $\Lambda=\diag(\lambda)$ such that $V\dagg U=Q\dagg \Lambda Q$. So, $\langle Vx,Ux\rangle = \langle Qx,\Lambda Q x\rangle$. Since $x$ is uniform and $Q$ is unitary, $Qx$ is uniform. Thus, $\Re\langle Vx,Ux\rangle$ is distributed the same as
\begin{align*}
    \Re\langle x,\Lambda x\rangle = \Re(\lambda_1)|x_1|^2+\cdots+\Re(\lambda_d)|x_d|^2,
\end{align*}
which is distributed the same as $Y(\Re(\lambda))$ as defined in Lemma~\ref{lem:ratio}, from which the result follows since $\Re(\lambda)\in[-1,1]^d$.
\end{proof}

In order to say something about the variance of the gradient rather than objective function, we apply Lemma~\ref{lem:unitbatch} to to get the following theorem about $p$-dimensional random vectors where each of the (possibly dependent) components has the same form as the random variable in Lemma~\ref{lem:unitbatch}.

\begin{theorem}
\label{thm:gradientnoise}
Let $U\in\U(d)$. Let $x_1,\ldots,x_m$ be uniformly sampled complex unit vectors. Then,
\begin{align*}
    &\sup_{V_i\in\U(d)}\ev{\bigg\|\left(\frac{1}{m}\sum_{k=1}^m\Re\langle V_ix_k,Ux_k\rangle-\frac{1}{d}\Re\langle V_i,U\rangle\right)_{i=1}^p\bigg\|^2}\\
    &\hspace{4cm}\leq \frac{4(d-1)p}{md(d+2)}.
\end{align*}
\end{theorem}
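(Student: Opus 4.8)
The plan is to reduce the vector bound to a sum of scalar variance bounds, each controlled by Lemma~\ref{lem:unitbatch}, and then handle the averaging over the $m$ independent samples. First I would expand the squared norm inside the expectation componentwise: for each fixed index $i\in[p]$, let
\begin{align*}
    Z_i := \frac{1}{m}\sum_{k=1}^m\Re\langle V_ix_k,Ux_k\rangle-\frac{1}{d}\Re\langle V_i,U\rangle.
\end{align*}
Then $\big\|(Z_i)_{i=1}^p\big\|^2=\sum_{i=1}^p Z_i^2$, so by linearity of expectation the left-hand side equals $\sum_{i=1}^p \ev{Z_i^2}$, and it suffices to bound each $\ev{Z_i^2}$ uniformly over $V_i\in\U(d)$. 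Crucially, even though the $p$ components may be dependent through the shared samples $x_1,\ldots,x_m$, the squared-norm expectation splits into a plain sum of per-component second moments, so the dependence across $i$ never enters.

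The key observation is that $\ev{Z_i^2}$ is exactly a variance. Indeed, by Lemma~\ref{lem:unitbatch} each scalar $\Re\langle V_ix_k,Ux_k\rangle$ has mean $\frac{1}{d}\Re\langle V_i,U\rangle$: taking the expectation over a uniform unit vector $x_k$ reproduces the $Y(\Re(\lambda))$ representation, whose mean is $\frac1d\sum_j\Re(\lambda_j)=\frac1d\Re\,\tr(V_i\dagg U)=\frac1d\Re\langle V_i,U\rangle$. Hence the subtracted term is precisely the common mean, and $Z_i=\frac1m\sum_{k=1}^m\big(\Re\langle V_ix_k,Ux_k\rangle-\ev{\cdot}\big)$ is a centered average, so $\ev{Z_i^2}=\var\!\big(\frac1m\sum_k\Re\langle V_ix_k,Ux_k\rangle\big)$. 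Since $x_1,\ldots,x_m$ are independent, the variance of the average of $m$ i.i.d.\ copies is $\frac1m$ times the single-sample variance, giving
\begin{align*}
    \ev{Z_i^2}=\frac{1}{m}\var\!\big(\Re\langle V_ix,Ux\rangle\big)\leq \frac{1}{m}\cdot\frac{4(d-1)}{d(d+2)},
\end{align*}
where the final inequality is Lemma~\ref{lem:unitbatch}. Summing over the $p$ components and taking the supremum over the $V_i$ yields the claimed bound $\frac{4(d-1)p}{md(d+2)}$.

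I expect the only genuinely delicate step to be the mean identity $\ev{\Re\langle V_ix,Ux\rangle}=\frac1d\Re\langle V_i,U\rangle$, which is what licenses interpreting $\ev{Z_i^2}$ as a variance rather than a raw second moment; this follows by the same unitary-diagonalization trick used in the proof of Lemma~\ref{lem:unitbatch}, since $\ev{|x_j|^2}=\frac1d$ for a uniform complex unit vector. Everything else is routine: the componentwise expansion of the norm, the independence reduction dividing the variance by $m$, and the direct appeal to Lemma~\ref{lem:unitbatch}. The subtlety worth flagging explicitly is that the stated supremum is over each $V_i$ independently, so the bound holds uniformly and the separate suprema combine additively with no interaction between distinct components.
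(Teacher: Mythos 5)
Your proposal is correct and follows essentially the same route as the paper's proof: expand the squared norm componentwise, recognize each second moment as the variance of the empirical average (since $\frac{1}{d}\Re\langle V_i,U\rangle$ is the common mean), use independence of the $x_k$ to reduce to $\frac{1}{m}$ times a single-sample variance, and finish with Lemma~\ref{lem:unitbatch}. If anything, your write-up is slightly more careful than the paper's: you verify the mean identity explicitly, and you obtain the correct factor $\frac{1}{m}$ where the paper's displayed chain writes $\frac{1}{m^2}$ after dropping the sum over $k$ (an evident typo, since the theorem's stated bound carries $\frac{1}{m}$).
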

\begin{proof}
Observe, the left-hand side (LHS) is equivalent to
\begin{align*}
    &\sup_{V_i\in\U(d)}\ev{\sum_{i=1}^p\bigg|\frac{1}{m}\sum_{k=1}^m\Re\langle V_ix_k,Ux_k\rangle-\frac{1}{d}\Re\langle V_i,U\rangle\bigg|^2}\\
    &= \sum_{i=1}^p \sup_{V_i\in\U(d)}\var\left(\frac{1}{m}\sum_{k=1}^m\Re\langle V_ix_k,Ux_k\rangle\right)\\
    &= \frac{1}{m^2}\sum_{i=1}^p \sup_{V_i\in\U(d)}\var\left(\Re\langle V_ix_k,Ux_k\rangle\right) \leq \text{RHS}
\end{align*}
where the final inequality follows from Lemma~\ref{lem:unitbatch}.
\end{proof}

To apply Theorem~\ref{thm:gradientnoise}, note that $\nabla \Re\langle \Vct(\theta),U\rangle=(\Re\langle V_{\ct,i}(\theta),U\rangle/2)_{i=1}^{3n+4L}$ where the $V_{\ct,i}(\theta)$ can be seen in Section 5 of \cite{our}. Thus, we have that mini-batch stochastic gradient descent with batch-size $m$ applied to this problem satisfies the bounded variance property with
\begin{align*}
    \sigma^2 = \frac{(d-1)(3n+4L)}{md(d+2)}\eeo\left(\frac{n}{md}\right).
\end{align*}
% which is on the order of $1/d$ if $m\geo(3n+4L)$ or on the order of $1/d$ up to log factors if $m\geo(1)$ and $L\leo(n)$. 
The fact that the noise vanishes exponentially fast with increasing $n$ is why SGD can theoretically outperform GD. However, SGD still takes too long to converge and does not scale beyond $n=9$. Still, exponentially vanishing noise suggests that we could sketch the problem and still find a solution to the original problem. We see how to do that next.

\subsection{Sketch-and-solve}

The second algorithm we present is S\&S-1. Instead of sampling a new mini-batch at each iteration, we sample a single mini-batch, $X$, at the start and then solve the following sketched problem to completion:
\begin{equation}
\label{eq:sketch}
\min_{\theta}-\frac{1}{m}\Re\langle \Vct(\theta)X,UX\rangle.
\end{equation}
We describe it in the following algorithm.

% , and it is based on sketching ideas~\cite{.}. In practice, as we have discussed, stochastic gradient descent takes far too long to converge due to the increased number of iterations.

% So the idea to sample a single $X\in\C^{d\times m}$ and maximize $\Re\langle \Vct(\theta)X,UX\rangle/m$ instead of $\Re\langle \Vct(\theta),U\rangle/d$. We call this sketch-and-solve.

\begin{algorithm}[H]
\begin{algorithmic}[1]
\Require $U$, ansatz $\Vct(\cdot), m$
\State Randomly initialize $\theta_0$
\State Sample $\Omega\in \C^{d\times m}$
\State Compute $X$ by normalizing the columns of $\Omega$
\State Apply L-BFGS to Problem~\eqref{eq:sketch} to get $\theta$
\State \Return $\theta$
\end{algorithmic}
\caption{Sketch-and-solve 1 (S\&S-1)}
\label{algo:ss1}
\end{algorithm}

The algorithm S\&S-1 converges to a local minimum of the sketched objective much faster than our basic method of applying L-BFGS to Problem~\eqref{theproblem}. Furthermore, our numerical experiments suggest that for $L\eeo(n)$ and $m\eeo(n)\eeo(\log(d))$, the sketched and original objectives are sufficiently close, that is
\begin{align}\label{approx}
    \frac{\Re\langle \Vct(\theta)X,UX\rangle}{m}\approx\frac{\Re\langle \Vct(\theta),U\rangle}{d}\qquad \forall \theta,
\end{align}
so that S\&S-1 does converge \emph{practically} to solutions of the Problem~\eqref{theproblem}, entailing an exponential decrease in computational complexity and allowing us to scale beyond $n=9$ (to $n=15$).

For the interested reader, in Appendix~\ref{app.ss1}, we show some preliminary theoretical results for~\eqref{approx}.

\subsection{Sketch-and-solve with randomized linear algebra}
\label{subsec:rla}

In the previous section, we set $X\in\C^{d\times m}$ by normalizing the columns of $\Omega$. We explore here another idea stemming from randomized linear algebra. In particular, one could project onto the subspace of the largest eigenvectors of $\Vct(\theta_0)-U$. That way one would eliminate the largest discrepancies first.

The idea is to compute the following skinny QR decomposition~\cite[Ch. 5.2]{golub2013matrix}:
\begin{align*}
    QR = (\Vct(\theta_0)\dagg-U\dagg)\Omega
\end{align*}
where $Q\in\C^{d\times m}$ has orthonormal columns, so $Q\dagg Q=I_m$ but $QQ\dagg \neq I_d$, and $R\in\C^{m\times m}$ is upper triangular. Then $QQ\dagg (\Vct(\theta_0)\dagg-U\dagg)\approx \Vct(\theta_0)\dagg-U\dagg$ for sufficiently large $m$ \cite{halko2011finding}. Using $Q$ and Lemma~\ref{lem:equiv}, we get a new sketched problem:
\begin{equation}
\label{eq:rla}
\min_{\theta}-\frac{1}{m}\Re\langle \Vct(\theta)Q,UQ\rangle.
\end{equation}

Intuitively, as we iteratively solve Problem~\eqref{eq:rla}, the largest discrepancies between $\Vct(\theta_0)$ and $U$ are eliminated first, eventually rendering $Q$ ineffective. Thus, it makes sense to restart every $T$ iterations for a total of $k$ epochs. This is formalized in the following algorithm.

% Letting $q_k$ denote the $k$th column of $Q$ and $P=QQ\dagg$, we can write:
% \begin{align*}
% \|P(U-V)\|_F^2 &= \pTr{P (U - V) \Dag{\left(P (U - V)\right)}} \\
% &= \pTr{P U \Dag{U} P} + \pTr{P V \Dag{V} P}\\
% &\hspace{2.4cm} - 2 \pRe{\pTr{P U \Dag{V} P}}\\
% &= \pTr{P} + \pTr{P} - 2 \pRe{\pTr{P U \Dag{V} P}} \\
% &= 2 m - 2 \pRe{\pTr{U \Dag{V} P^2}}\\
% &= 2 m - 2 \pRe{\pTr{U \Dag{V} P}} \\
% &= 2 m - 2 \pRe{\pTr{U \Dag{V} Q \Dag{Q}}}\\
% &= 2 m - 2 \pRe{\pTr{U \Dag{V} \sum_{k=1}^{m} q_k \Dag{q}_k}} \\ 
% &= 2 m - 2 \pRe{\sum_{k=1}^{m} \pTr{\Dag{q}_k U \Dag{V} q_k}}\\
% &= 2 m - 2 \pRe{\sum_{k=1}^{m} \langle q_k, U\Dag{V}q_k\rangle} \\
% &= 2 m \left(1 - \frac{1}{m} \pRe{\sum_{k=1}^{m} \langle q_k, U\Dag{V}q_k\rangle} \right).
% \end{align*}

% Thus, we have two choices for sketching matrices:
% \begin{itemize}
%     \item Set $X$ as $\Omega$ with normalized columns
%     \item Set $X=Q$ where $QR=(\Vct(\theta_0)-U)\Omega$.
% \end{itemize}
% $X$ can either be fixed or can be modified as the optimization advances. The QR perspective suggests a sketching dimension of $m=O(3n+4L)$.

\begin{algorithm}[H]
\begin{algorithmic}[1]
\Require $U$, ansatz $\Vct(\cdot), m,T,k$
\State Randomly initialize $\theta_0$
\For{$i=0,\ldots,k-1$}
\State Sample $\Omega\in \C^{d\times m}$
\State $QR\leftarrow (\Vct(\theta_i)\dagg-U\dagg)\Omega$
\State Starting at $\theta_i$, apply $T$ iterations of L-BFGS to Problem~\eqref{eq:rla} to get $\theta_{i+1}$
\EndFor
\State \Return $\theta_k$
\end{algorithmic}
\caption{Sketch-and-solve 2 (S\&S-2)}
\label{algo:ss2}
\end{algorithm}

\begin{figure*}
    \centering
    \includegraphics[width=\columnwidth]{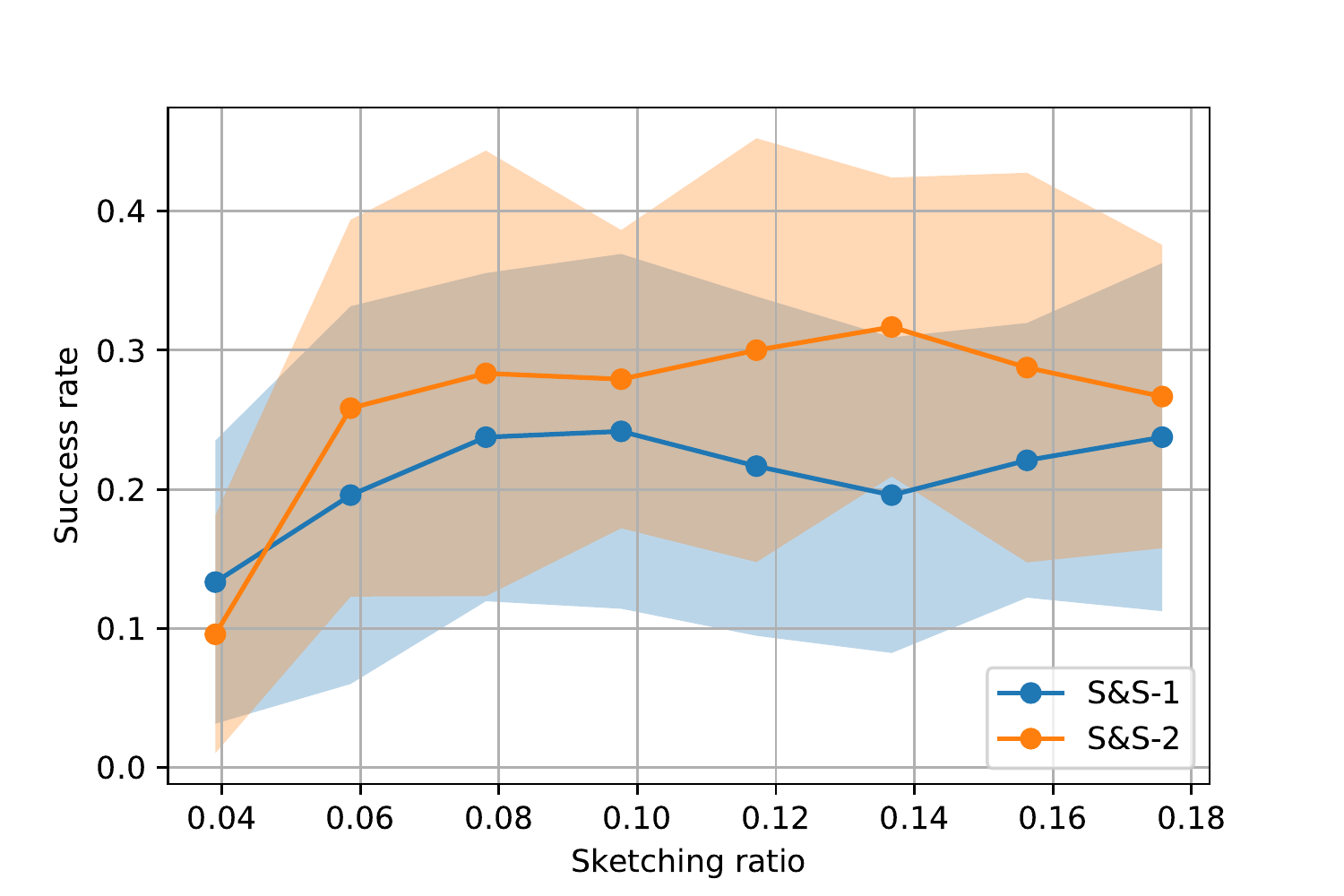}
    \includegraphics[width=\columnwidth]{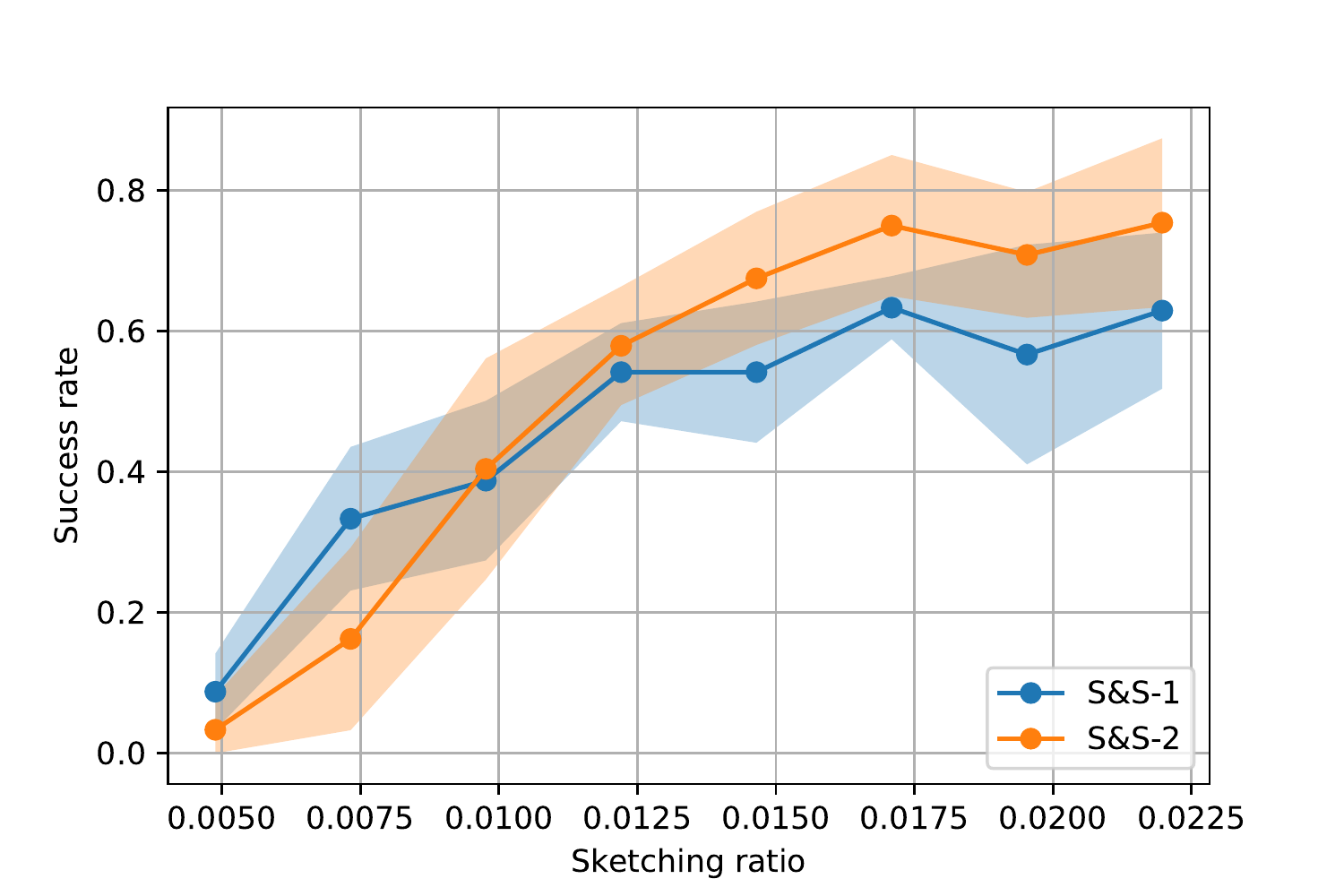}
    \caption{Success rate vs. sketching ratio ($m/d$) for (left) S\&S-1 and S\&S-2, both with 3 epochs, where $n=9$ and $L=27$ and (right) S\&S-1 and S\&S-2, both with 3 epochs, where $n=12$ and $L=24$. The continuous line is the average, while the shaded area is one standard deviation. }
    \label{fig:success}
\end{figure*}

\begin{figure*}
    \centering
    \includegraphics[width=\columnwidth]{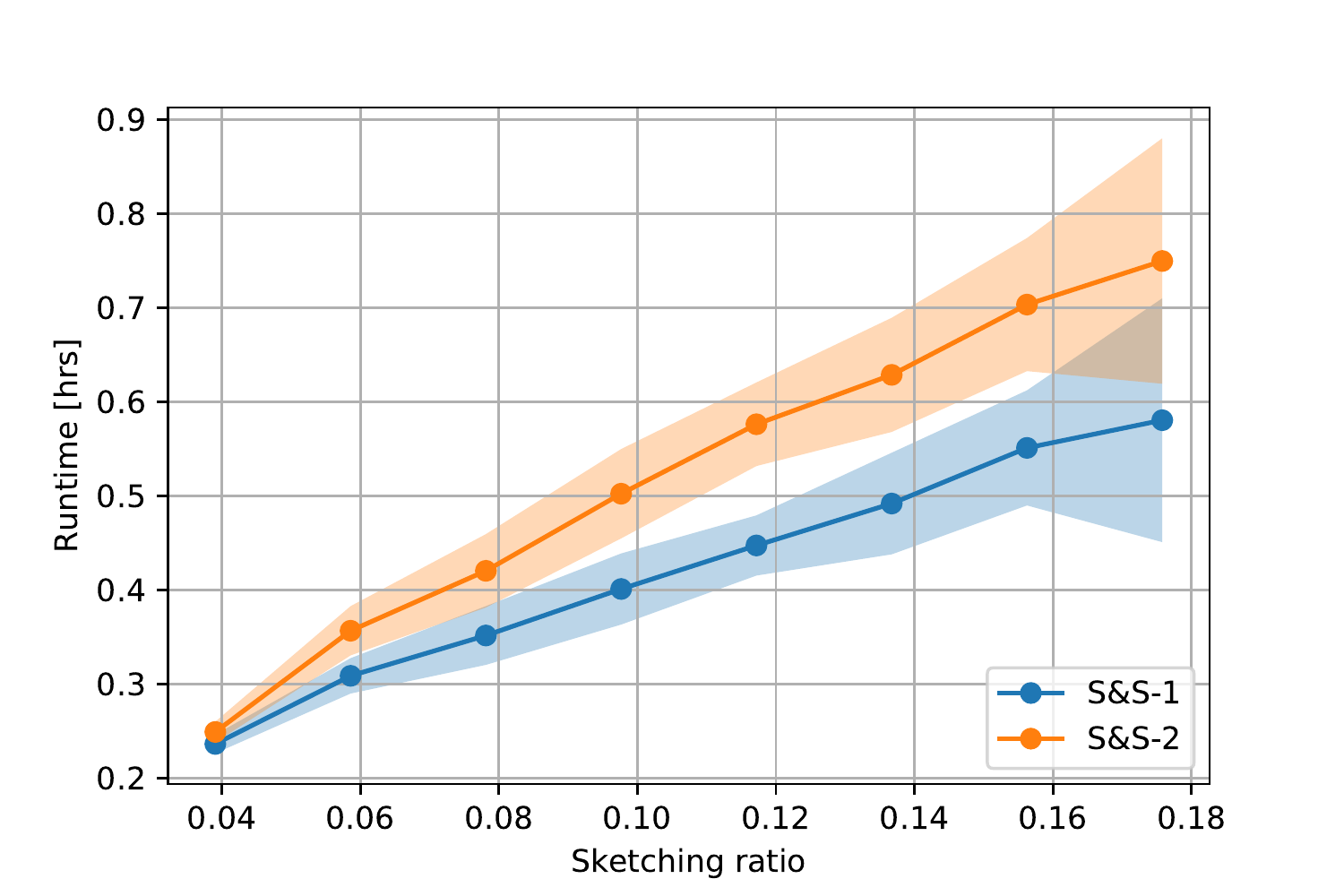}
    \includegraphics[width=\columnwidth]{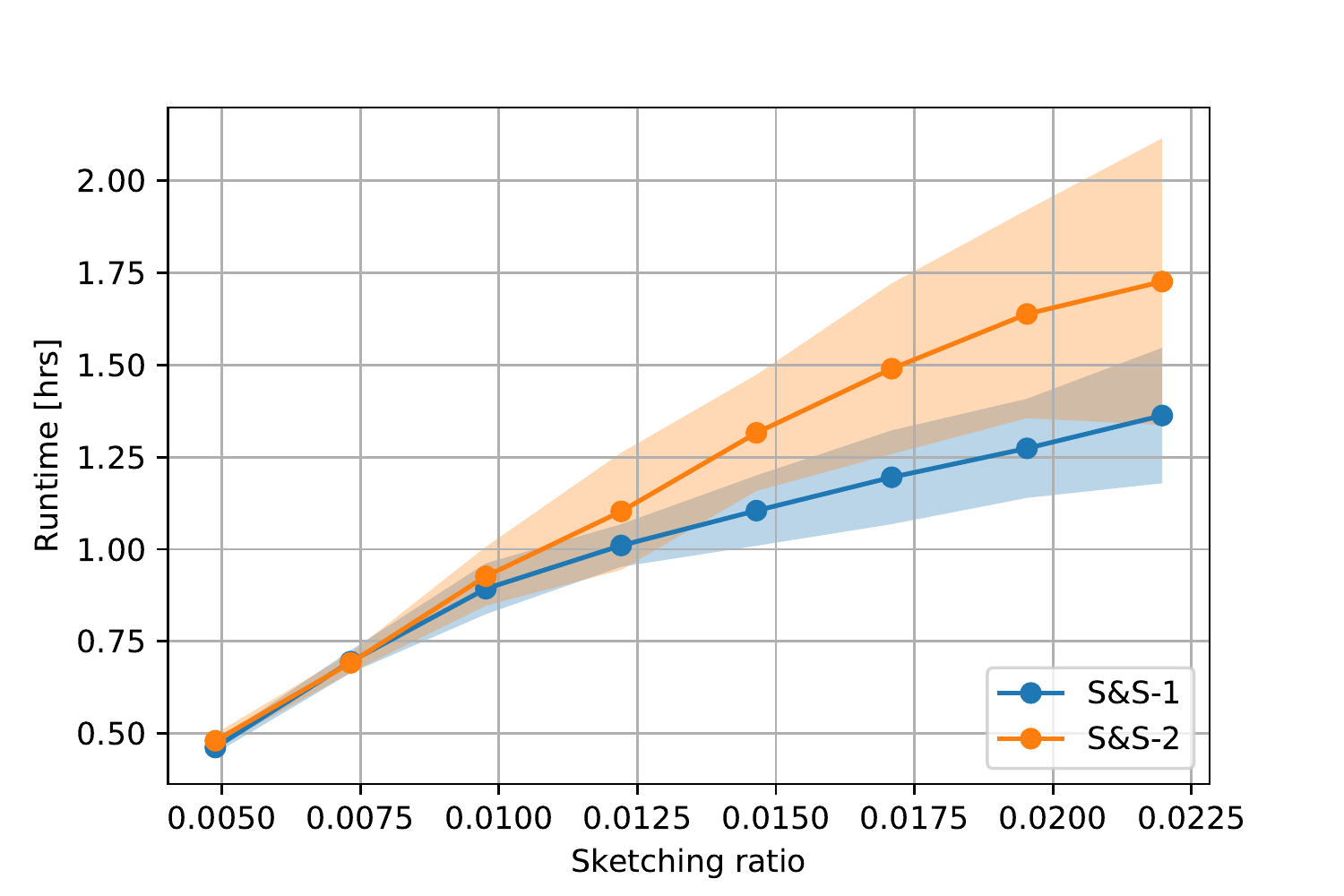}
    \caption{Runtime vs. sketching ratio ($m/d$) for (left) S\&S-1 and S\&S-2, both with 3 epochs, where $n=9$ and $L=27$ and (right) S\&S-1 and S\&S-2, both with 3 epochs, where $n=12$ and $L=24$. The continuous line is the average, while the shaded area is one standard deviation.}
    \label{fig:time}
\end{figure*}

% \begin{figure}
%     \centering
%     \includegraphics[width=\columnwidth]{n9L27time.pdf}
%     \includegraphics[width=\columnwidth]{n12L24time.pdf}
%     \caption{Mean computation time vs. sketching dimension for (left) S\&S-1 and S\&S-2 where $n=9$ and $L=27$ and (right) S\&S-1 where $n=12$ and $L=24$}
%     \label{fig:my_label}
% \end{figure}

While the theoretical characterization of S\&S-2 is more complex and left for future research, an analogy with~\cite{halko2011finding} suggests that we set the sample size to $m\geo(3n+4L)$.

\section{Efficient computation}
\label{sec:comp}

All the proposed methods hinge on gradient computations (including L-BFGS), and in particular we need to classically compute $\nabla \Re\langle \Vct(\theta)X,UX\rangle = \nabla \Re\langle \Vct(\theta)X,\Vct(\theta_U)X\rangle$ as efficiently as possible (using that $U = \Vct(\theta_U)$ for some $\theta_U$).

Done naively, this will take $3n+4L$ times as long as computing $\Re\langle \Vct(\theta)X,\Vct(\theta_U)X\rangle$. However, there is a way to reduce the complexity by storing intermediate computations in an automatic differentiation fashion. So, first we will discuss how to compute the objective $\Re\langle \Vct(\theta)X,\Vct(\theta_U)X\rangle$ efficiently, then we will discuss how to compute $\nabla \Re\langle \Vct(\theta)X,\Vct(\theta_U)X\rangle$ efficiently.

To compute $\Re\langle \Vct(\theta)X,\Vct(\theta_U)X\rangle$ with matrix-matrix multiplications takes $\bigo(d^2m)$ time and $\bigo(d^2)$ space (not including the more costly construction of the matrix $\Vct(\theta)$ itself). We will show how to improve this complexity by multiplying by each column of $X$ separately and so replacing each matrix-matrix multiplication with $m$ optimized matrix-vector multiplications.

The single-qubit gate $v$
applied to the $q$th qubit is the matrix $V=I_{2^{q-1}}\otimes v\otimes I_{2^{n-q}}$. Let $P$ permute the qubits so that the $q$th qubit goes to the $n$th qubit place. Then
\begin{align*}
    V = P\tran V' P \coloneqq P\tran \begin{pmatrix}
    v &&0\\
    &\ddots &\\
    0&& v
    \end{pmatrix} P.
\end{align*}
$V'$ is a block-diagonal matrix with 2 by 2 blocks. To multiply $V'$ times a vector $y$, we can reshape $y$ column-wise to the shape $(2,2^{n-1})$ and multiply $v$ from the left. Thus, the cost of computing $Vx$ is the cost of one $(2,2)(2,2^{n-1})$ matrix multiplication and two permutations of a vector in $\R^d$ (the reshaping cost gets absorbed into the permutation cost).

Similarly, to apply a CNOT with control $q_1$ and target $q_2$ to $x$, we can use a permutation $P$ to move the $q_1$th qubit to the $(n-1)$th qubit place and the $q_2$th to the $n$th. Then $V'$ is block-diagonal with blocks
\begin{align*}
    v = \begin{pmatrix}
    1&0&0&0\\
    0&1&0&0\\
    0&0&0&1\\
    0&0&1&0
    \end{pmatrix}.
\end{align*}
Now to apply $V'$ times a vector $y$, we can reshape $y$ column-wise to the shape $(4,2^{n-2})$ and multiply $v$ from the left, which ends up being the same as swapping the 3rd and 4th rows. Thus, we have derived how to compute CNOT \textit{as} a permutation.

% the cost of computing $Vx$ is simply the cost of permuting a vector in $\R^d$.

Note that the permutation $P$ only depends on gate placement and can be computed once for every qubit position and pair of qubit positions. We do not construct the matrix $P$ but only store a permutation array of size $d$ in memory. The cost of applying it is $O(d)$ (though it is not a cache-friendly operation).

So, to multiply a single gate times a vector costs $\bigo(d)$ in both space and time. Since there are $\bigo(n+L)$ gates, we can compute $\Vct(\theta)x$ (or $V_{\ct,i}(\theta)x$) in $\bigo((n+L)d)$ time and $\bigo(d)$ space. Since there are $m$ vectors, we can compute $\Re\langle \Vct(\theta)X,\Vct(\theta_U)X\rangle$ in $\bigo((n+L)dm)$ time and $\bigo(dm)$ space. If $L=\bigo(n)$, then the time complexity becomes $\bigo(ndm)$.

To see how to compute $\nabla \Re\langle \Vct(\theta)X,\Vct(\theta_U)X\rangle$ efficiently, consider $F(\theta;x,y)=\langle V_p(\theta_p)\cdots V_1(\theta_1)x,y\rangle$ where $V_k$ is either a rotation gate, or a rotation gate and a CNOT (e.g. we can group the CNOT on the right-hand side of Figure~\ref{fig:spin} with the y-rotation on the control qubit). Let the Pauli matrix corresponding to $V_k$ be $\sigma_k\in\{x,y,z\}$. Thus, $\partialk V_k(\theta_k)=-\frac{i}{2}\sigma_k V_k(\theta_k)$ and so
\begin{align*}
    &\partialk F(\theta;x,y)\\
    &= \bigg\langle V_p(\theta_p)\cdots -\frac{i}{2}\sigma_k V_k(\theta_k)\cdots V_1(\theta_1)x,y\bigg\rangle\\
    &= \frac{i}{2}\Big\langle\sigma_k V_k(\theta_k)\cdots V_1(\theta_1)x,V_{k+1}(\theta_{k+1})\dagg\cdots V_p(\theta_p)\dagg y\Big\rangle.
\end{align*}
So, set $w_0=x$ and $z_0=V_1(\theta_1)\dagg\cdots V_p(\theta_p)\dagg y$ and, for $k\in[p]$,
\begin{align*}
    &w_k= V_k(\theta_k)w_{k-1}, \qquad z_k = V_k(\theta_k)z_{k-1}.
\end{align*}
Then
\begin{align*}
    \partialk F(\theta;x,y) &= \frac{i}{2}\langle\sigma_kw_k,z_k\rangle.
\end{align*}
Thus, the complexity of computing the gradient is on the same order as the complexity of computing the objective.

\section{Results}
\label{sec:results}

To showcase our proposed algorithms, we focus here on three circuit structures: a 9 qubit circuit structure with $L=3n=27$ CNOTs, and so $3n+4L=135$ parameters; a 12 qubit circuit structure with $L=2n=24$ CNOTs, and so $3n+4L=132$ parameters; and a 15 qubit circuit structure with $L=n=15$ CNOTs, and so $3n+4L=105$ parameters. 

First, we ran SGD and found that it was not able to find a solution for any of the circuit structures because the convergence rate is so slow. Thus, we only consider S\&S-1 and S\&S-2 in this section.

From the proposed sketching algorithms, given a target $U=\Vct(\theta_U)$ and a sketching dimension $m$, there is a probability that a run of S\&S-1 or S\&S-2 finds a solution to Problem~\eqref{theproblem}. Given a sketching dimension $m$, we would like to understand this success probability over the set of $\Vct(\theta_U)$. In particular, we would like to know (1) when the minimum success probability is non-zero and (2) what the distribution of the success probability is when $\theta_U$ is sampled from the uniform distribution. To answer these questions, we look at the following experiment.

We consider different sketching dimensions, $m$, in the set $\{20,30,40,50,60,70,80,90\}$, and we randomly sample $10$ different target circuits. For each random target, we run S\&S-1, or S\&S-2, $24$ times. For both methods, we do three epochs: that is, we randomly initialize then sketch-and-solve three times, using the previous solution as initialization for the latter two sketch-and-solves.\footnote{While multiple epochs are necessary for S\&S-2, since only the largest discrepancies are eliminated with the first epoch, we do multiple epochs of S\&S-1 as well so that the runtimes are comparable.} If a run results in a fidelity of $0.999$ or greater, then the run is considered a success. The fidelity is averaged over the Haar distribution and so, given matrices $U$ and $V$, is
\begin{align*}
    \Bar{F}(U,V)=\frac{1+\frac{1}{d}|\langle V,U\rangle|^2}{d+1}\in[0,1]
\end{align*}
(see Remark 1 of \cite{our}). We compute the success rate for each of the $10$ targets and then compute the sample mean and standard deviation. We plot the success rates against the sketching ratios in Figure~\ref{fig:success} and the runtimes against the sketching ratios in Figure~\ref{fig:time} for $n=9$ and $n=12$. The $n=15$ results are in Figure~\ref{fig:n15}.

\begin{figure*}
    \centering
    \includegraphics[width=\columnwidth]{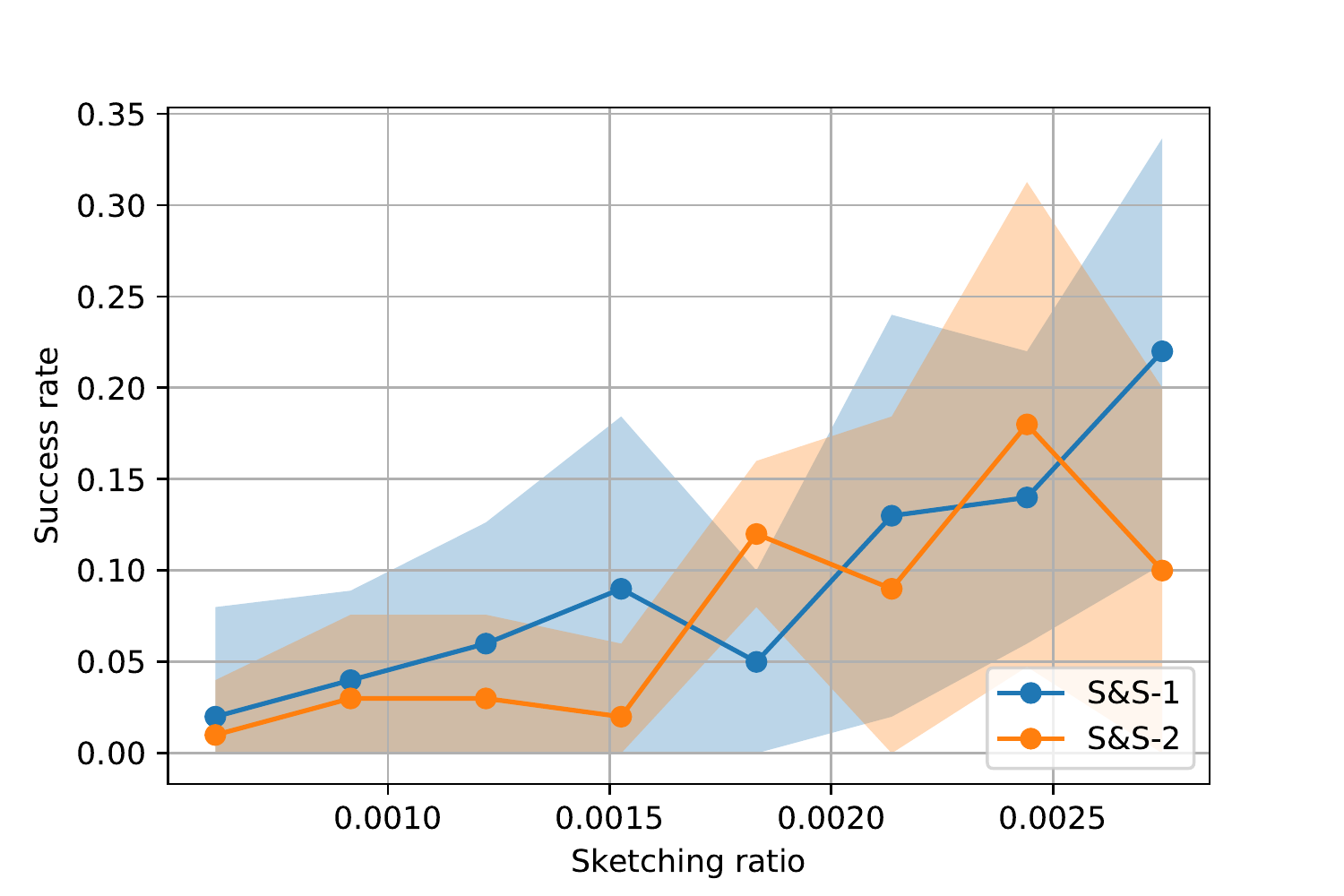}
    \includegraphics[width=\columnwidth]{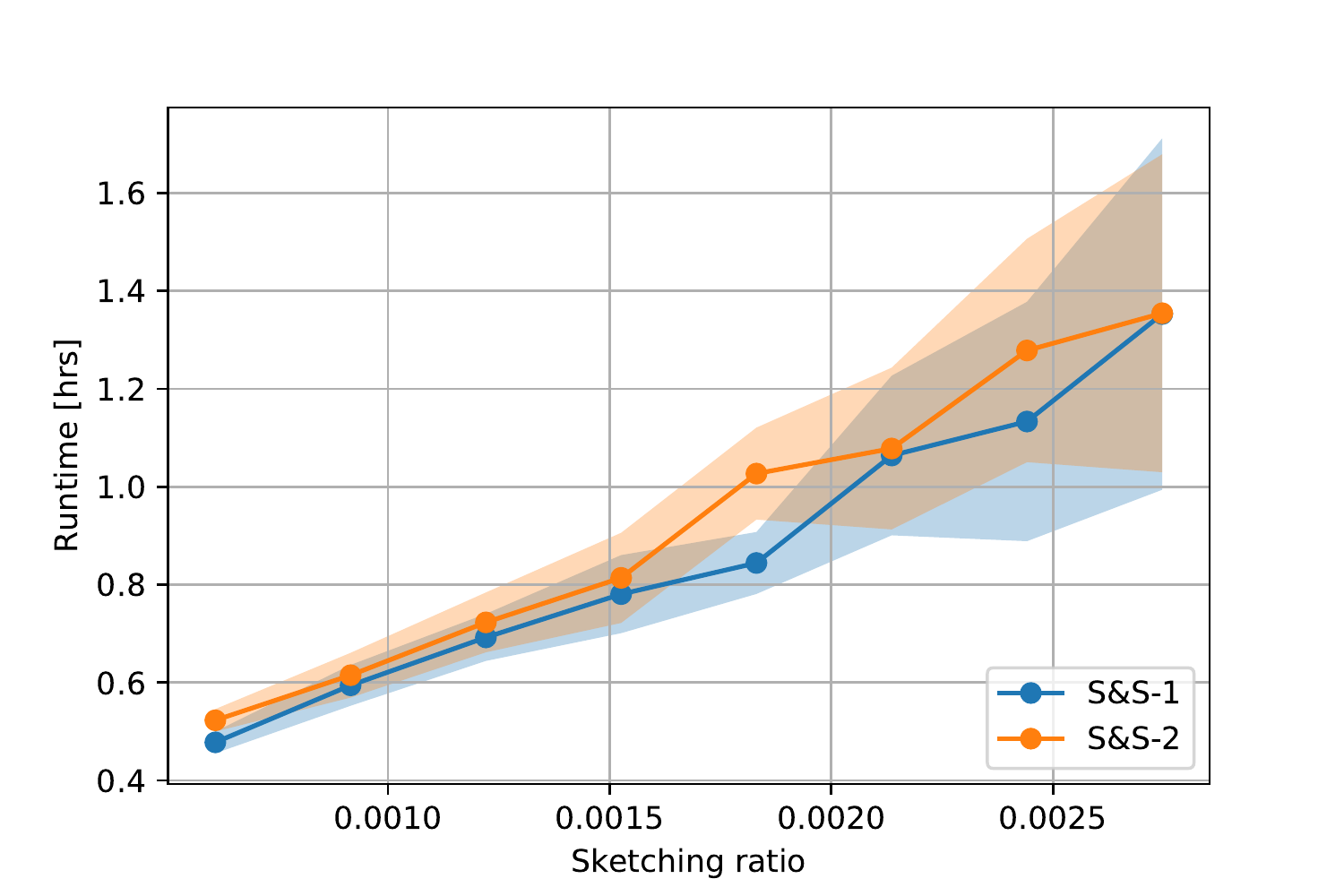}
    \caption{Success rate (left) and runtime (right) vs. sketching ratio ($m/d$) for S\&S-2 with 1 epochs, where $n=15$ and $L=15$. The continuous line is the average, while the shaded area is one standard deviation.}
    \label{fig:n15}
\end{figure*}

% {\color{violet}
% Albert, do you want to add some details about the computer we ran the experiments on?
% }

The success rate depends on both initialization and sketching. Thus, as we increase the sketching dimension, we do not expect the success rate to go to 1, but rather to go to the (unknown) initialization success rate. In choosing the parameters of the experiments, we observed that the initialization success rate becomes very small as $L$ is increased to $4n$ and beyond. This suggests that for higher numbers of CNOTs, it may be necessary to iteratively solve (similar to the block Kaczmarz method~\cite{NEEDELL2015322}, as in the sequential optimization procedure of \cite{squander,rakyta2021approaching}) or update (e.g. block coordinate descent) blocks of $3n$ parameters at a time. Finally, note that taking multiple runs is not an issue and, in fact, can be parallelized. 

For $n=9$, the success rate reaches around 25\% for 10\% sketching for both S\&S-1 and S\&S-2, though it is slightly higher for S\&S-2. At 10\% sketching, both S\&S-1 and S\&S-2 run in around 30 minutes, though S\&S-2 takes slightly longer. At 4\% sketching, S\&S-1 was able to compile every target in 24 tries (i.e. the minimum success rate over the targets was non-zero). At 6\% sketching, S\&S-2 was able to compile every target in 24 tries.

For $n=12$, the success rate reaches around 60\%, for S\&S-1, and 70\%, for S\&S-2, for 2\% sketching. At 2\% sketching, the runtime is around 80 minutes for S\&S-1 and 100 minutes for S\&S-2. At 0.7\% sketching, S\&S-1 and S\&S-2 were able to compile every target in 24 tries.

For $n=15$ we only did one epoch for both methods due to computation time. We also only did 10 trials per random target. The success rate reaches around 20\%, for S\&S-1, and 15\%, for S\&S-2, for 0.3\% sketching. At 0.3\% sketching, both S\&S-1 and S\&S-2 run in around 80 minutes. At 0.3\% sketching, S\&S-1 was able to compile every target in 10 tries. At 0.2\% sketching, S\&S-2 was able to compile every target in 10 tries.

Thus, for all three circuit structures---$n=9$, 12, and 15---the two sketch-and-solve methods find solutions to Problem~\eqref{theproblem} in a reasonable amount of time (on the order of an hour).

% For $L=n$, we have $7n$ parameters and are able to go up to $n=12$ running in around 30 minutes. For $L=3n$, we have $15n$ parameters and are able to go up to $n=9$ running in around 1 hour.

% \newcommand{\ntarg}{N_{\text{target}}}
% \newcommand{\nsket}{N_{\text{sketch}}}
% \newcommand{\fmin}{f_{\text{min}}}

% For $n=12$ and $L=n$, we consider sketching dimensions $m=12,18,24,30,36$. For $n=9$ and $L=3n$, we consider sketching dimensions $m=9,15,21,27,33$. Once $n$ and $L$ are fixed, we generate $\ntarg=10$ random targets, $\theta_U$. For each target and sketching dimension, we generate $\nsket=10$ sketching matrices, $X$. For a given target and sketching matrix, we apply BFGS and measure the fidelity, $f$, of the output. Then, for a given target and sketching dimension, we compute the success rate, $\eta$, as the number of outputs with fidelity greater than $\fmin=0.999$ divided by $\nsket$. Note that the success rate is influenced by both sketching and initialization.

\section{Conclusions}
\label{sec:conclusions}

In this paper, we considered the problem of quantum compilation from an optimization perspective by fixing the circuit structure and optimizing over rotation angles. We were unable to scale beyond 9 qubits using standard optimization likely due to barren plateaus. So, we considered three different optimization problems, each suggesting its own algorithm. One algorithm was stochastic gradient descent and two were sketch-and-solve methods. For all three algorithms (and for the original problem) we explained how to efficiently compute the gradient. Our results were quite positive, showing that the sketch-and-solve algorithms can compile (in around an hour) 9 qubit, 27 CNOT circuits; 12 qubit, 24 CNOT circuits; and 15 qubit, 15 CNOT circuits. One observation is that the success rate with respect to random initialization becomes very small when there are 4 or more CNOTs per qubit. We leave it as a future research direction to consider block solvers with blocks of 3 CNOTs per qubit. 

\appendix

\subsection{Proof of Lemma~\ref{lem:equiv}}
\label{app:equiv}

Observe,
\begin{align*}
    &\|(V-U)QQ\dagg\|_F^2 =\tr\left[QQ\dagg(V\dagg-U\dagg)(V-U)QQ\dagg\right]\\
    &\hspace{1cm}=\tr\left[(V\dagg-U\dagg)(V-U)QQ\dagg QQ\dagg\right]\\
    &\hspace{1cm}= \tr\left[(V\dagg-U\dagg)(V-U)QQ\dagg\right]\\
    &\hspace{1cm}= 2\tr\left[QQ\dagg\right]-\tr\left[V\dagg UQQ\dagg\right]-\tr\left[U\dagg VQQ\dagg\right]\\
    &\hspace{1cm}=2m-2\Re\tr\left[V\dagg UQQ\dagg\right]\\
    &\hspace{1cm}=2m-2\Re\tr\left[Q\dagg V\dagg UQ\right]\!=\! 2m-2\Re\langle VQ,UQ\rangle.\\
\end{align*}

\subsection{Barren plateaus}
\label{app:concentration}

{\bf Barren plateaus.} Consider an objective function over a parameterized quantum circuit. A barren plateau is a region of the domain such that the gradients in that region concentrate around zero exponentially fast with respect to increasing $n$. If the volume of the domain that is not a barren plateau also goes to zero exponentially fast with respect to increasing $n$, then the objective is said to satisfy the barren plateaus property. The implications are twofold. First, the computational complexity for an optimization algorithm to escape a barren plateau increases exponentially with respect to increasing $n$. Second, the probability of randomly initializing outside of a barren plateau goes to zero exponentially fast with respect to increasing $n$.
%When optimizing parametric quantum circuits classically, as we do in solving~\eqref{theproblem}, one is often confronted with an ``adversarial'' landscape. This landscape limits the speed at which iterative methods converge to local minima.
%In particular, one often encounters ``barren plateaus'', meaning areas in which the iterative method of choice progresses very slowly to an arbitrarily bad local minimum.
This issue has been vastly treated recently in quantum computing~\cite{cerezo2021cost,hadfield2019quantum,lee2018generalized,mcclean2018barren,grant2019initialization,zhou2020quantum,volkoff2021large,skolik2021layerwise,cerezo2021higher}.
% ~\cite{cerezo2021cost,hadfield2019quantum,lee2018generalized,verdon2019quantum,mcclean2018barren,grant2019initialization,verdon2019learning,zhou2020quantum,marrero2021entanglement,patti2021entanglement,volkoff2021large,skolik2021layerwise,wang2021noise,cerezo2021higher}.
Roughly speaking, the problem with barren plateaus is that if an objective satisfies the property, then it would be very hard to optimize, and exponentially more so increasing the number of qubits. 

{\bf Haar distributed assumption.} To give a sense of the issue \emph{for our problem}, we include the following lemma. The lemma, applied to our problem, says that if $\Vct(\theta)$ is Haar distributed for uniformly random $\theta$, then our objective satisfies the barren plateaus property. We know that $\Vct(\theta)$ is not Haar distributed in general, since the image of $\Vct$ is much smaller than $\SU(d)$, but the lemma is still important to give an understanding on the main challenges. Keep in mind that the lemma is not an if and only if result, and we will discuss extensions to non-Haar distributions after the lemma. We indicate with $\pr{\cdot}$ the probability of a given event. 

\begin{lemma}
\label{lem:concentration}
Let $d\in\N$ and $U\in\U(d)$. Let $V$ be sampled uniformly from $\U(d)$. Then, for all $t>0$,
\begin{align*}
    \pr{|\langle V,U\rangle|>t}\leq \frac{1}{t^2}.
\end{align*}
\end{lemma}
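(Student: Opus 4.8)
The plan is to control the tail of $|\langle V,U\rangle|$ by its second moment via Markov's inequality, so that the whole statement reduces to a single expectation computation. Concretely, since $\{|\langle V,U\rangle|>t\}=\{|\langle V,U\rangle|^2>t^2\}$ and $|\langle V,U\rangle|^2\geq 0$, Markov's inequality gives
\begin{equation*}
\pr{|\langle V,U\rangle|>t}\leq \frac{\ev{|\langle V,U\rangle|^2}}{t^2}.
\end{equation*}
Hence it suffices to prove that $\ev{|\langle V,U\rangle|^2}=1$, after which the stated bound $1/t^2$ is immediate; note this also shows the bound is uniform over $U\in\U(d)$.

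To evaluate the second moment, I would write $\langle V,U\rangle=\sum_{a,b}\bar V_{ab}U_{ab}=\tr(V\dagg U)$ and expand
\begin{equation*}
\ev{|\langle V,U\rangle|^2}=\sum_{a,b,c,e}\ev{\bar V_{ab}V_{ce}}\,U_{ab}\bar U_{ce}.
\end{equation*}
The one ingredient I need is the first-order Haar moment formula $\ev{\bar V_{ab}V_{ce}}=\tfrac1d\delta_{ac}\delta_{be}$, valid because $V$ is Haar distributed on $\U(d)$. Substituting it collapses the sum to $\tfrac1d\sum_{a,b}|U_{ab}|^2=\tfrac1d\|U\|_F^2$, and since $U$ is unitary we have $\|U\|_F^2=\tr(U\dagg U)=\tr(I_d)=d$, giving exactly $1$.

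The main (and essentially only) obstacle is justifying that moment formula, i.e. the normalization $1/d$. If one prefers not to invoke Weingarten calculus directly, I would instead argue by bi-invariance of the Haar measure: setting $W\defeq V\dagg U$, left/right invariance makes $W$ again Haar distributed, so $|\langle V,U\rangle|^2$ is distributed as $|\tr W|^2$ and the problem reduces to showing $\ev{|\tr W|^2}=1$ for Haar $W$. This last identity is Schur orthogonality applied to the defining representation $W\mapsto W$ of $\U(d)$, which is irreducible, so its character $\tr W$ has unit $L^2$-norm against the Haar measure. Either route closes the argument; everything else is routine.
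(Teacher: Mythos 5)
Your proof is correct and follows essentially the same route as the paper's: bound the tail by Markov's inequality and reduce everything to the second-moment computation $\ev{|\langle V,U\rangle|^2}=\frac{1}{d}\|U\|_F^2=1$, using the first-order Haar moment formula $\ev{\overline{V_{ab}}V_{ce}}=\delta_{ac}\delta_{be}/d$ --- exactly the ingredient the paper imports as Equation 10 of \cite{mcclean2018barren}. Your expansion is in fact slightly cleaner, since you manipulate $|\langle V,U\rangle|^2$ directly as a complex modulus, whereas the paper first splits it into real and imaginary parts and recombines. The one genuine addition is your alternative justification of the moment formula: passing to $W=V\dagg U$, which is again Haar distributed by bi-invariance, and observing that $\ev{|\tr W|^2}=1$ is Schur orthogonality for the irreducible defining representation of $\U(d)$. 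That route makes the key normalization self-contained where the paper relies on a citation, at the mild cost of invoking representation theory; either way the conclusion and the uniformity of the bound over $U\in\U(d)$ are established correctly.
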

\begin{proof}
First,
$
    \langle V,U\rangle = \sum_{j=1}^d\sum_{k=1}^d \overline{V_{jk}}U_{jk}, 
$
so
\begin{align*}
    |\langle V,U\rangle|^2 &= \Big(\sum_{j=1}^d\sum_{k=1}^d \Re\left(\overline{V_{jk}}U_{jk}\right)\Big)^2\\ & \hspace*{2cm} +\Big(\sum_{j=1}^d\sum_{k=1}^d \Im\left(\overline{V_{jk}}U_{jk}\right)\Big)^2.
\end{align*}
Developing the products and sums,
\begin{align*}
    |\langle V,U\rangle|^2 &= \sum_{j=1}^d\sum_{k=1}^d \sum_{\ell=1}^d\sum_{m=1}^d\Re\left(\overline{V_{jk}}U_{jk}\right)\Re\left(\overline{V_{\ell m}}U_{\ell m}\right)\\
    &\hspace{2.8cm}+\Im\left(\overline{V_{jk}}U_{jk}\right)\Im\left(\overline{V_{\ell m}}U_{\ell m}\right)\\
    &= \sum_{j=1}^d\sum_{k=1}^d \sum_{\ell=1}^d\sum_{m=1}^d\Re\left(\overline{V_{jk}}U_{jk}\overline{\overline{V_{\ell m}}U_{\ell m}}\right)\\
    &= \Re \sum_{j=1}^d\sum_{k=1}^d \sum_{\ell=1}^d\sum_{m=1}^d U_{jk}\overline{U_{\ell m}}~\overline{V_{jk}}V_{\ell m}.
\end{align*}
Then, taking the expected value
\begin{align*}
    \ev{|\langle V,U\rangle|^2} &= \Re \sum_{j=1}^d\sum_{k=1}^d \sum_{\ell=1}^d\sum_{m=1}^d U_{jk}\overline{U_{\ell m}}\ev{\overline{V_{jk}}V_{\ell m}}\\
    &\overset{*}{=} \Re \sum_{j=1}^d\sum_{k=1}^d \sum_{\ell=1}^d\sum_{m=1}^d U_{jk}\overline{U_{\ell m}}\frac{\delta_{j\ell}\delta_{km}}{d},
\end{align*}
where (*) uses Equation 10 of \cite{mcclean2018barren}. See \cite{collins2003moments,collins2006integration,puchala2017symbolic} for more details on this formula. Therefore, simplifying the last equation, we obtain
\begin{align*}
    \ev{|\langle V,U\rangle|^2} = \frac{1}{d}\sum_{j=1}^d\sum_{k=1}^d |U_{jk}|^2= \frac{1}{d}\|U\|_F^2= 1.
\end{align*}
% So the result follows from Chebyshev's inequality.
Finally, the result follows via Markov's inequality.
\end{proof}

Lemma~\ref{lem:concentration} implies that
\begin{align*}
    \pr{\frac{1}{d}|\langle V,U\rangle|>\frac{1}{\sqrt{d}}}\leq \frac{1}{d}
\end{align*}
and $d=2^n$, so we have a height that goes to zero exponentially fast with respect to increasing $n$ and the probability that the objective is higher than that height goes to zero exponentially fast with increasing $n$. Thus, for a sufficiently high number of qubits ($n\approx 9$), the landscape at random initialization is prohibitively flat if $\Vct(\theta)$ is Haar distributed.
%if (1) the target is fixed and $\Vct(\theta)$ is close to Haar distributed for uniformly random $\theta$ or (2) the target is randomly sampled from the Haar distribution.

{\bf Beyond the Haar distributed assumption.} As we said, the lemma is not an if and only if result, meaning that the assumption of the Haar distribution may not hold and still the cost may have the barren plateaus property. In this context, the work \cite{mcclean2018barren} considers a relaxed assumption and shows that parameterized quantum circuits with similar objectives to ours satisfy the barren plateaus property if only the random parameterized circuit is a unitary 2-design, which means it only agrees with the Haar distribution up to the second moment. Further, the work \cite{cerezo2021cost} shows that for parameterized circuits such as $\Vct$, the number of CNOTs, $L$, and the type of objective function, determine whether the objective satisfies the barren plateaus property or not. If the objective is global, that is, if it requires measurement of all $n$ qubits, then it satisfies the barren plateaus property for all $L$. If the objective is local, that is, if it requires measurement of only one qubit, then it satisfies the barren plateaus property for $L\geo (n)$. Our objective in~\eqref{theproblem} is global, which means that despite it not being Haar-distributed, it satisfies the barren plateaus property.

\subsection{Proof of Lemma~\ref{lem:ratio}}
\label{app:ratio}

For $k\in[n]$, define
\begin{align*}
    Y_k = \frac{X_1^2+\cdots+X_k^2}{X_1^2+\cdots+X_n^2}.
\end{align*}
Then $Y_k\in[0,1]$ almost surely and, for all $t\in(0,1)$,
\begin{align*}
    &\pr{Y_k \leq t}\\
    &= \pr{\frac{\left(X_1^2+\cdots+X_k^2\right)/k}{\left(X_{k+1}^2+\cdots+X_n^2\right)/(n-k)}\leq \frac{t/k}{(1-t)/(n-k)}}\\
    &= I_t\left(\frac{k}{2},\frac{n-k}{2}\right)
\end{align*}
where $I$ is the regularized incomplete beta function and the second equality comes from the CDF of the F-distribution. See Chapter 27 of \cite{johnson1995continuous}. This proves that $Y_k\sim \bd(k/2,(n-k)/2)$. Thus, from Chapter 25 of \cite{johnson1995continuous},
\begin{align*}
    \ev{Y_k} = \frac{k}{n}, \, \textrm{and, for all $p\in\N$,}\,\, \ev{Y_k^{p+1}} = \frac{k+2p}{n+2p}\ev{Y_k^p}.
\end{align*}
%and, for all $p\in\N$,
%\begin{align*}
%    \ev{Y_k^{p+1}} = \frac{k+2p}{n+2p}\ev{Y_k^p}.
%\end{align*}
Now we are ready to compute the variance of $Y(a)$. First,
\begin{align*}
    \ev{Y(a)} = \ev{Y_1}\sum_i a_i = \frac{1}{n}\sum_i a_i.
\end{align*}
Next, given $x\in\R^n$, observe that
\begin{align*}
    \left(\sum_i a_ix_i\right)^2 &= \sum_{i,j} a_ia_jx_ix_j\\
    &= \frac{1}{2}\sum_{i,j} a_ia_j\left[(x_i+x_j)^2-x_i^2-x_j^2\right].
\end{align*}
Expanding the right-hand side, 
\begin{align*}
   \textrm{RHS} &=\frac{1}{2}\sum_{i,j} a_ia_j(x_i+x_j)^2-\frac{1}{2}\sum_j a_j \sum_i a_i x_i^2\\
    &\hspace{3cm}-\frac{1}{2}\sum_i a_i \sum_j a_j x_j^2\\
    &= \frac{1}{2}\sum_{i\neq j} a_i a_j (x_i+x_j)^2+2\sum_i a_i^2 x_i^2\\
    &\hspace{3cm}-\left(\sum_i a_i\right)\left(\sum_i a_i x_i^2\right).
\end{align*}
So,
\begin{align*}
    &\ev{Y(a)^2}\\
    &= \frac{1}{2}\ev{Y_2^2}\sum_{i\neq j} a_i a_j+\ev{Y_1^2}\left(2\sum_i a_i^2-\left(\sum_i a_i\right)^2\right)\\
    &= \frac{4}{n(n+2)}\sum_{i\neq j} a_i a_j+\frac{3}{n(n+2)}\left(\sum_i a_i^2-\sum_{i\neq j}a_i a_j\right)\\
    &= \frac{1}{n(n+2)}\sum_{i\neq j} a_i a_j+\frac{3}{n(n+2)}\sum_i a_i^2.
\end{align*}
Finally,
\begin{align*}
    & \var\left(Y(a)\right) = \ev{Y(a)^2}-\ev{Y(a)}^2
    \\ &=\! \frac{2(n\!-\!1)}{n^2(n\!+\!2)}\sum_i a_i^2 - \frac{2}{n^2(n\!+\!2)}\sum_{i\neq j}a_i a_j  \leq \frac{4(n\!-\!1)}{n^2(n\!+\!2)}\sum_i a_i^2
\end{align*}
where we use Young's inequality in the last step.

\subsection{Results for the algorithm S\&S-1 }\label{app.ss1}

We show here that \eqref{approx} does not hold if $\text{Rng}(\Vct)=\SU(d)$, where $\text{Rng}$ denotes the range of a function. However, the dimension of the real manifolds $\text{Rng}(\Vct)$ and $\SU(d)$ are $3n+4L$ and $4^n-1$ respectively. So, for $L\eeo(n)$, the dimension of $\text{Rng}(\Vct)$ is on the order of log of the dimension of $\SU(d)$. This leaves hope of proving that the sketch-and-solve technique works. We leave a proof or counter-example as a future research direction.

To be precise, if we set $X\in\C^{d\times m}$ by normalizing the columns of $\Omega$, then we would like to bound
\begin{align*}
    \pr{\sup_{V\in \text{Rng}(\Vct)}\bigg|\frac{1}{m}\Re\langle VX,UX\rangle-\frac{1}{d}\Re\langle V,U\rangle\bigg|>t}.
\end{align*}

First, let's see what we can say with the supremum outside the probability measure.

\begin{theorem}
\label{thm:outsidesup}
Let $U\in\U(d)$. Let $x_1,\ldots,x_m$ be uniformly sampled complex unit vectors. Then, for all $\delta\in(0,1)$,
\begin{align*}
    &\sup_{V\in\U(d)}P\Bigg(\bigg|\frac{1}{m}\sum_{k=1}^m \Re\langle Vx_k,Ux_k\rangle-\frac{1}{d}\Re\langle V,U\rangle \bigg|\\
    &\hspace{3cm}>\sqrt{\frac{4(d-1)}{md(d+2)\delta}}\Bigg)\leq \delta.
\end{align*}
\end{theorem}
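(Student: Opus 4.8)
The plan is to recognize the quantity inside the probability as the deviation of an empirical mean from its true mean, and then apply Chebyshev's inequality using the variance bound already established in Theorem~\ref{thm:gradientnoise}. First I would observe that, from the stochastic reformulation in~\eqref{eq:stoch}, the expectation of $\Re\langle Vx,Ux\rangle$ over a uniformly sampled complex unit vector $x$ is exactly $\tfrac{1}{d}\Re\langle V,U\rangle$. Hence, for a fixed $V\in\U(d)$, the expression
\[
\frac{1}{m}\sum_{k=1}^m \Re\langle Vx_k,Ux_k\rangle - \frac{1}{d}\Re\langle V,U\rangle
\]
is precisely the centered empirical mean of $m$ i.i.d.\ copies of the random variable $\Re\langle Vx,Ux\rangle$.

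Next I would invoke Theorem~\ref{thm:gradientnoise} with $p=1$ (equivalently, Lemma~\ref{lem:unitbatch} together with independence of the $x_k$) to bound its second moment uniformly over $V\in\U(d)$:
\[
\sup_{V\in\U(d)}\ev{\bigg|\frac{1}{m}\sum_{k=1}^m \Re\langle Vx_k,Ux_k\rangle - \frac{1}{d}\Re\langle V,U\rangle\bigg|^2} \le \frac{4(d-1)}{md(d+2)}.
\]
Since the empirical mean is centered at its expectation, this second-moment bound is a variance bound, so Chebyshev's inequality gives, for every $t>0$ and every fixed $V$,
\[
\pr{\bigg|\frac{1}{m}\sum_{k=1}^m \Re\langle Vx_k,Ux_k\rangle - \frac{1}{d}\Re\langle V,U\rangle\bigg| > t} \le \frac{4(d-1)}{md(d+2)t^2}.
\]
Choosing $t = \sqrt{\tfrac{4(d-1)}{md(d+2)\delta}}$ makes the right-hand side equal to $\delta$. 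Because neither this probability bound nor the threshold $t$ depends on $V$, taking the supremum over $V\in\U(d)$ preserves the bound, which is exactly the claim.

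The proof involves no genuine obstacle: it is Chebyshev applied to a variance estimate that has already been done. The only point requiring care is that the supremum sits \emph{outside} the probability, so a single pointwise-in-$V$ Chebyshev bound that happens to be uniform in $V$ suffices, and one never needs to control all $V$ simultaneously. The genuinely hard version---the one actually needed to certify~\eqref{approx}---would place the supremum \emph{inside} the probability, which would require a covering/union-bound or chaining argument over the manifold $\text{Rng}(\Vct)$ (and a corresponding sub-Gaussian or bounded-difference tail rather than a mere variance bound); that uniform concentration step is precisely what is left for future research.
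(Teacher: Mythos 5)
Your proposal is correct and is essentially the paper's own proof: a pointwise-in-$V$ Chebyshev bound using the variance estimate of Lemma~\ref{lem:unitbatch} (equivalently Theorem~\ref{thm:gradientnoise} with $p=1$), followed by solving for the threshold $t$, with the supremum passing through because the bound is uniform in $V$. Your closing remark about the supremum sitting outside the probability is also exactly the point the paper makes when contrasting this result with the subsequent lower-bound theorem.
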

\begin{proof}
Given $t>0$,
\begin{align*}
     &\sup_{V\in\U(d)}\pr{\bigg|\frac{1}{m}\sum_{k=1}^m \Re\langle Vx_k,Ux_k\rangle-\frac{1}{d}\Re\langle V,U\rangle \bigg|>t}\\
     &\leq \frac{1}{t^2}\sup_{V\in\U(d)} \var\left(\frac{1}{m}\sum_{k=1}^m \Re\langle Vx_k,Ux_k\rangle\right)\\
     &= \frac{1}{mt^2}\sup_{V\in\U(d)} \var\left(\Re\langle Vx_1,Ux_1\rangle\right)\leq \frac{4(d-1)}{mt^2d(d+2)}
\end{align*}
where the first inequality follows by Chebyshev's inequality, and the second inequality follows from Lemma~\ref{lem:unitbatch}. Setting the right-hand side to $\delta$ and solving for $t$ gives the result.
\end{proof}

If we move the supremum in Theorem~\ref{thm:outsidesup} inside of the probability measure, then we have the following lower bound.

\begin{theorem}
Let $U\in\U(d)$. Let $x_1,\ldots,x_m$ be uniformly sampled complex unit vectors. Then
\begin{align*}
    &\mathbb{P}\Bigg(\sup_{V\in\U(d)}\bigg|\frac{1}{m}\sum_{k=1}^m \Re\langle Vx_k,Ux_k\rangle-\frac{1}{d}\Re\langle V,U\rangle \bigg|\\
    &\hspace{3cm}>\frac{2(d-m)}{d}\Bigg)=1.
\end{align*}
\end{theorem}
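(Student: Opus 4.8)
The plan is to collapse the supremum into a single matrix norm and bound that norm deterministically. First I would substitute $W = V\dagg U$; since $V\mapsto V\dagg U$ is a bijection of $\U(d)$, the inner supremum over $V$ equals a supremum over $W\in\U(d)$. Using $\Re\langle Vx_k,Ux_k\rangle=\Re\tr[Wx_kx_k\dagg]$ and $\Re\langle V,U\rangle=\Re\tr[W]$, the bracketed quantity equals $\Re\tr[WM]$ with $M=\frac{1}{m}\sum_{k=1}^m x_kx_k\dagg-\frac{1}{d}I$. Because $-W$ is unitary whenever $W$ is, the absolute value is harmless: $\sup_{W}|\Re\tr[WM]|=\sup_{W}\Re\tr[WM]$, and this is the standard variational characterization of the nuclear norm $\|M\|_*$ (the maximizer is the unitary factor in the polar decomposition of $M$). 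So the whole problem reduces to lower bounding $\|M\|_*$.

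Next I would diagonalize $M$, which is Hermitian. The empirical matrix $A=\frac{1}{m}\sum_k x_kx_k\dagg$ is positive semidefinite with $\tr A=1$, and it annihilates the $(d-m)$-dimensional subspace $S^\perp$ orthogonal to $S=\mathrm{span}(x_1,\ldots,x_m)$; hence $M$ has eigenvalue $-1/d$ with multiplicity at least $d-m$, contributing $(d-m)/d$ to $\|M\|_*$. On $S$ the eigenvalues of $M$ are $\mu_i-1/d$, where $\mu_1,\ldots,\mu_m\ge 0$ are the eigenvalues of $A$ restricted to $S$ and satisfy $\sum_i\mu_i=1$. Thus $\|M\|_*=\frac{d-m}{d}+\sum_{i=1}^m|\mu_i-\frac{1}{d}|$, and since $\sum_i(\mu_i-\frac{1}{d})=1-\frac{m}{d}=\frac{d-m}{d}\ge 0$, the triangle inequality gives $\sum_{i=1}^m|\mu_i-\frac{1}{d}|\ge\frac{d-m}{d}$, whence $\|M\|_*\ge\frac{2(d-m)}{d}$. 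This bound is deterministic, so the stated probability is $1$; equivalently, the explicit reflection $V=U(I-2P_S)$ attains it, since then $Vx_k=-Ux_k$ pins the first term to $-1$ while $\frac{1}{d}\Re\tr[V\dagg U]=(d-2m)/d$, and linear independence of $x_1,\ldots,x_m$ (which holds almost surely for $m\le d$) forces $\tr P_S=m$.

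The delicate point, and what I expect to be the real obstacle, is the \emph{strict} inequality. From the computation above, strictness is equivalent to at least one $\mu_i$ lying strictly below $1/d$, i.e. to the sketch spectrum straddling $1/d$ rather than sitting entirely above it. This fails in boundary cases---for instance $m=1$ forces $\mu_1=1>1/d$, so all eigenvalues exceed $1/d$ and the reflection attains the bound with equality---so the clean deterministic statement is really the lower bound $\ge\frac{2(d-m)}{d}$. That already delivers the intended message: once the supremum is moved inside the probability, the sketch error is at least $2(d-m)/d$, which tends to the maximal value $2$ as $m/d\to 0$. I would therefore either state the result with $\ge$, or add the nondegeneracy hypothesis (e.g. $m$ large enough that the empirical spectrum provably crosses $1/d$) needed to justify the strict inequality with probability one.
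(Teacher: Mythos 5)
Your argument is correct, and at its core it rebuilds the paper's own witness from a more conceptual starting point. The paper's proof takes the QR decomposition $[Q_1~Q_2][R_1\tran~0]\tran$ of $[x_1|\cdots|x_m]$, sets $V'=U(Q_1Q_1\dagg-Q_2Q_2\dagg)$, and verifies by hand that $\Re\langle V'x_k,Ux_k\rangle=1$ and $\Re\langle V',U\rangle=2m-d$, giving the deviation $2(d-m)/d$. Since $Q_1Q_1\dagg=P_S$ and $Q_2Q_2\dagg=I-P_S$, this $V'$ is exactly $-U(I-2P_S)$, i.e.\ the negative of your reflection witness. What your route buys beyond the paper's is an identity rather than a single feasible point: writing the bracketed quantity as $\Re\tr\left[WM\right]$ with $W=V\dagg U$ and $M=\frac{1}{m}\sum_k x_kx_k\dagg-\frac{1}{d}I$, the supremum over $\U(d)$ equals the nuclear norm $\|M\|_*$ exactly, and your spectral decomposition evaluates it as $\frac{d-r}{d}+\sum_{i}\left|\mu_i-\frac{1}{d}\right|$ with $r=\dim S$ and $\mu_i$ the eigenvalues of $\frac{1}{m}\sum_k x_kx_k\dagg$ restricted to $S$. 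The paper's construction only shows the supremum is at least $2(d-m)/d$; yours says precisely when it is larger.

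Your worry about the strict inequality is not pedantry; it is a genuine defect of the theorem as stated, and the paper's own proof does not repair it. The paper's final display bounds the probability below by $\pr{\frac{2(d-m)}{d}>t}$, which equals $1$ only for thresholds $t$ strictly less than $2(d-m)/d$; at the threshold $t=2(d-m)/d$ appearing in the theorem it equals $0$ and proves nothing. Worse, your spectral formula shows the strict version is false in general: the supremum exceeds $2(d-m)/d$ if and only if some $\mu_i$ falls below $1/d$. For $m=1$ this never happens ($\mu_1=1$), so the event in the theorem has probability $0$, and for uniformly random $x_k$ with $m\ll\sqrt{d}$ the $\mu_i$ concentrate near $1/m\gg 1/d$, so with high probability the supremum equals $2(d-m)/d$ exactly. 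The fix you propose---stating the conclusion with ``$\geq$'', which your nuclear-norm bound establishes deterministically and hence with probability one---is the right one, and it costs the paper nothing: the theorem's role is only to show that, with the supremum taken over all of $\U(d)$ rather than over $\mathrm{Rng}(\Vct)$, the sketching error is at least $2(d-m)/d$, which tends to the maximal value $2$ as $m/d\to 0$; the weak inequality delivers exactly that message.
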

\begin{proof}
Let $[Q_1~Q_2][R_1\tran~0]\tran$ be the QR decomposition of $[x_1|\cdots|x_m]$. Set $V'=U(Q_1Q_1\dagg-Q_2Q_2\dagg)$. It can be verified that $V'\in\U(d)$, $\langle V'x_k,Ux_k\rangle=1$ for all $k\in\{1,\ldots,m\}$, and $\langle V',U\rangle = 2m-d$. So, for $t>0$,
\begin{align*}
    &\pr{\sup_{V\in\U(d)} \bigg|\frac{1}{m}\sum_{k=1}^m \Re\langle Vx_k,Ux_k\rangle-\frac{1}{d}\Re\langle V,U\rangle\bigg|>t}\\
    &\ge \pr{\bigg|\frac{1}{m}\sum_{k=1}^m \Re\langle V'x_k,Ux_k\rangle-\frac{1}{d}\Re\langle V,U\rangle\bigg|>t}\\
    &= \pr{\bigg|1-\frac{2m-d}{d}\bigg|>t}= \pr{\frac{2(d-m)}{d}>t}.
\end{align*}
\end{proof}

Thus, in order to potentially prove that the sketch-and-solve technique works, we really do need to restrict the supremum to be over the  range of $\Vct$.

\newpage
\bibliographystyle{IEEETran}
\bibliography{main}

\label{LastPage}
\end{document}